\documentclass[12pt]{amsart}
\usepackage{graphicx}
\usepackage{setspace}
\usepackage{natbib}
\usepackage{hyperref}
\usepackage[foot]{amsaddr}
\usepackage{amsmath}	
\usepackage{bm}
\allowdisplaybreaks
\vfuzz2pt 
\hfuzz2pt 
\newtheorem{thm}{Theorem}

\newtheorem{lem}[thm]{Lemma}

\newtheorem{assum}{Assumption}
\theoremstyle{definition}

\newtheorem*{algorithm}{Algorithm}
\theoremstyle{remark}
\newtheorem{rem}{Remark}

\numberwithin{equation}{section}
\newcommand{\norm}[1]{\left\Vert#1\right\Vert}

\DeclareMathOperator*{\argmax}{argmax}
\DeclareMathOperator*{\argmin}{argmin}

\setlength{\textwidth}{6.0in}
\setlength{\textheight}{8.5in}
\setlength{\topmargin}{0.2in}
\setlength{\oddsidemargin}{0in}
\setlength{\evensidemargin}{0in}


\title[Doubly Robust Uniform Confidence Band]{Doubly Robust Uniform Confidence Band for the Conditional Average Treatment Effect Function}
\thanks{We would like to thank Yanchun Jin for capable research assistance,
Takahiro Hoshino, Yu-Chin Hsu, Art\=uras Juodis, Kengo Kato,  Edward Kennedy, Taisuke Otsu, Dylan Small, seminar participants at various institutes, the co-editor and three anonymous referees for helpful comments, and Robert Lieli for providing the data to us.
Lee's work
was supported by the European Research Council (ERC-2009-StG-240910-ROMETA and ERC-2014-CoG-646917-ROMIA).
Okui's work was supported by the Japan Society of the Promotion of Science (KAKENHI 25285067, 25780151, 15H03329, 16K03598). 
Whang's work was supported by the SNU College of Social Science Research Grant.}

\date{This version: October 2016; the first draft: November 2014.}

\author[Lee]{Sokbae Lee$^{1,2}$ }
\address{$^1$Department of Economics, Columbia University, 1022 International Affairs Building,
420 West 118th Street, New York, NY 10027, USA.}
\address{$^2$Centre for Microdata Methods and Practice, Institute for Fiscal
Studies, 7 Ridgmount Street, London, WC1E 7AE, UK.}
\email{sl3841@columbia.edu}
\author[Okui]{Ryo Okui$^{3,4}$}
\address{$^3$Institute of Economic Research, Kyoto University, Yoshida-Honmachi
Sakyo, Kyoto, 606-8501, Japan.}
\address{$^4$Department of Economics, University of Gothenburg,
P.O. Box 640, SE-405 30 Gothenburg, Sweden.}
\email{okui.ryo.3@gmail.com}
\author[Whang]{Yoon-Jae Whang$^5$}
\address{$^5$Department of Economics, Seoul National University, 1
Gwanak-ro, Gwanak-gu, Seoul, 151-742, Republic of Korea.}
\email{whang@snu.ac.kr}

\begin{document}

\doublespacing

\begin{abstract}
 In this paper, we propose a doubly robust method to estimate the heterogeneity of the average treatment effect with respect to observed covariates of interest. 
We consider a situation where a large number of covariates are needed for identifying the average treatment effect but the covariates of interest for analyzing heterogeneity are of much lower dimension. 
Our proposed estimator is doubly robust and avoids the curse of dimensionality.
We propose a uniform confidence band that is easy to compute, and we illustrate its usefulness via Monte Carlo experiments and an application to the effects of smoking on birth weights. 
\bigskip

\noindent
Keywords: average treatment effect conditional on covariates, uniform confidence band, double robustness, Gaussian approximation.
\bigskip

\noindent
JEL classification codes: C14, C21


\end{abstract}

\maketitle

\clearpage

\section{Introduction}

In this paper, we propose a doubly robust method to estimate the heterogeneity of the average treatment effect
with respect to observed covariates of interest. To describe our methodology, 
we consider the potential outcome framework.
Let $Y_{1}$ and $Y_{0}$ be potential individual outcomes in two states, with
treatment and without treatment, respectively. For each individual, the observed outcome $%
Y $ is $Y=DY_{1}+(1-D)Y_{0}$, where $D$ denotes an indicator
variable for the treatment, with $D=0$ if an individual is not treated and $%
D=1$ if an individual is treated. We assume that independent and identically
distributed observations $\{(Y_{i},D_{i},\mathbf{Z}_{i}):i=1,\ldots ,n\}$ of $(Y,D,\mathbf{Z})$
are available, where $\mathbf{Z} \in \mathbb{R}^{p}$ denotes a $p$-dimensional vector of covariates. 

Suppose that a researcher is interested in evaluating the average treatment effect conditional on only a
subset of covariates $\mathbf{X}$,
 which is of a substantially lower dimension than $\mathbf{Z}$, 
 where $\mathbf{Z}\equiv (\mathbf{X}^{\top },\mathbf{V}^{\top })^{\top }\in \mathbb{R}^{d}\times
\mathbb{R}^{m}$, $p \equiv d+m$.
That is, we are interested in a case where $d \ll p$.

The main object of interest in this paper is the conditional average treatment effect function (CATEF); namely: 
\begin{align}\label{def:cater}
g(\mathbf{x}) \equiv\mathbb{E}[Y_{1}-Y_{0}|\mathbf{X}=\mathbf{x}].
\end{align}
When $d \geq 3$, it is difficult to plot $g(\mathbf{x})$, not to mention low precision due to the curse of dimensionality.
Hence, for practical reasons, we focus on the case that $d=1$ or $d=2$, while $p$ is often of a much higher dimension.

To achieve identification of the CATEF, we
assume that $Y_1$ and $Y_0$ are independent of 
$D$ conditional on $\mathbf{Z}$ (known as the unconfoundedness assumption):
\begin{align}\label{unconfound}
 (Y_1, Y_0) \perp D |\mathbf{Z},
\end{align}  
where $\perp$ denotes the independence.
For \eqref{unconfound} to be plausible in applications, applied researchers
tend to consider a large number of covariates $\mathbf{Z}$. 
Note that in our setup, the treatment may be confounded in the sense that the treatment assignment may not be independent of the potential outcome variables given $\mathbf{X}$ only.
To satisfy the unconfoundedness condition, a much larger set of conditioning variables $\mathbf{Z}$ needs to be employed.

Different roles of covariates between $\mathbf{X}$ and $\mathbf{V}$
are noted in the recent literature. 
For example, \cite{Ogburn-et-al-JRSSB} consider a similar issue in the context of 
the local average treatment effect (LATE) of \cite{Imbens:Angrist:94}.
\cite{Ogburn-et-al-JRSSB} emphasize that 
conditioning on a large number of covariates $\mathbf{Z}$ may be required to make it plausible that 
the binary instrument is valid. 
In their empirical example, \cite{Ogburn-et-al-JRSSB} revisit the analyses of 
\cite{Poterba1995} and \cite{Abadie2003} to examine whether participation in 401(k) pension plans increases household savings. 
In their example, the vector of covariates $\mathbf{Z}$ for the identifying assumption consists of 
income, age, marital status, and family size, whereas the variable of interest $\mathbf{X}$ is income.
 \cite{Abrevaya-et-al-JBES} also consider the case of investigating the effect of smoking during pregnancy on birth weights. 
They are interested in estimating \eqref{def:cater} with $\mathbf{X}$ being the age of mother; however, as noted in \cite{Abrevaya-et-al-JBES}, it is unlikely that conditioning only on the age of mother would achieve the unconfoundedness assumption with nonexperimental data. As a result,
it is necessary to consider a high-dimensional $\mathbf{Z}$, including the age of the mother. 

The fact that a high-dimensional $\mathbf{Z}$ needs to be employed for \eqref{unconfound} to be plausible in an application makes a fully
nonparametric estimation approach impractical because of the curse of
dimensionality. For example, the propensity score is not nonparametrically estimable 
in moderately sized samples, if the dimension of $\mathbf{Z}$ is high.
One obvious alternative is to use a parametric model for the propensity score; however, it may lead to misleading results if the parametric model is misspecified. 

With the aim of providing a practical method and, at the same time, reducing sensitivity to model misspecification, 
we propose to use a doubly robust method based on parametric regression and propensity score models. 
 Our estimator of the CATEF is doubly robust in the sense that it is consistent when at least one of the regression model and the propensity score model is correctly specified.
Specifically, we first estimate CATEF$(\mathbf{Z})$ using a doubly robust procedure:
we estimate a parametric regression model of the outcome on $\mathbf{Z}$ for each treatment status and a parametric model for the probability of selecting into the treatment given $\mathbf{Z}$; we then 
combine the parametric estimation results in a doubly robust fashion to construct an estimate of CATEF$(\mathbf{Z})$.
We then obtain an estimate of CATEF$(\mathbf{X})$ by adopting the local linear smoothing of CATEF$(\mathbf{Z})$.
As a result, we avoid high-dimensional smoothing with respect to 
$\mathbf{Z}$ but mitigate the problem of misspecification by both the doubly robust estimation and low-dimensional nonparametric smoothing with respect to $\mathbf{X}$.

We emphasize that we are  willing to assume parametric specifications for the propensity score and 
regression models as functions of $\mathbf{Z}$ to avoid the curse of dimensionality, but not for CATEF$(\mathbf{X})$.
One may consider parametric estimation of CATEF$(\mathbf{X})$, as \cite{Ogburn-et-al-JRSSB} estimate their LATE parameter
using least squares approximations. However, note that 
even if the parametric specification of CATEF$(\mathbf{Z})$ is correct, the resulting specification of CATEF$(\mathbf{X})$
may not be correctly specified  since, for example, $\mathbb{E}[ \mathbf{Z}| \mathbf{X}]$ is possibly highly nonlinear.
To avoid this misspecification, we estimate CATEF$(\mathbf{X})$ nonparametrically.

Because the CATEF is a functional parameter, as a tool of inference, we propose to use a uniform confidence band for the CATEF.
Our construction of the uniform confidence band is based on some analytic approximation
of the supremum of a Gaussian process using arguments built on \cite{Piterbarg:96}, combined with a Gaussian approximation result of 
\cite{CCK:2014} and an empirical process result of \cite{Ghosal/Sen/vanderVaart:00}.
Our method is simple to implement 
and does not rely on resampling techniques.

This paper contributes to the literature on doubly robust estimation by demonstrating that the doubly robust procedures are useful for estimating the CATEF.
In this paper, we focus on the so-called augmented inverse probability weighting estimator that was originally proposed by \cite{RobinsRotnitzkyZhao94} for the estimation of the mean \citep[see also][]{RobinsRotnitzky95, ScharfsteinRotnitzkyRobins99}.
Their estimator appears to be the first estimator to be recognized as being doubly robust.
Since then, many other alternative doubly robust estimators have been proposed in the literature.
For example, the inverse probability weighting regression adjustment estimator \citep{KangSchafer07, Wooldridge07, Wooldridge10} 
is widely known and has been implemented in statistical software packages.
See the introduction of \cite{Tan10} for a comprehensive summary of other doubly robust estimators.
Doubly robust estimators have been advocated for use in many different areas of application: See, for example, \cite{LuncefordDavidian04} for medicine, \cite{GlynnQuinn10} for political science, \cite{Wooldridge10} for economics, and \cite{SchaferKang08} for psychology.
There are also doubly robust estimators available for different settings including instrumental variables estimation \citep{Tan06, OkuiSmallTanRobins12}
and estimation under multivalued treatments \citep{DeryaUysal:15}.
It would not be difficult to extend our method to allow these other doubly robust estimators and to consider different settings.
However, to keep the analysis simple, in this paper, we focus on the augmented inverse probability weighting 
estimator of the CATEF.

The CATEF is mathematically equivalent to ``$V$-adjusted variable importance'' of \cite{vanderLaan06}, who proposes it as a measure of variable importance in prediction. 
\cite{vanderLaan06} proposes a doubly robust estimator of $V$-adjusted variable importance.
Contrary to ours, he considers the projection of the $V$-adjusted variable importance
 on a parametric working model and 
does not consider a nonparametric estimation.
Moreover, a uniform confidence band is not examined in \cite{vanderLaan06}.

In a recent paper, \cite{Abrevaya-et-al-JBES} consider the estimation of the CATEF\footnote{Our paper is independent of \cite{Abrevaya-et-al-JBES} and it is started without knowing their work.};  however, there are two main differences of this paper relative to \cite{Abrevaya-et-al-JBES}.
First, we propose the doubly robust procedure to estimate the CATEF.
\cite{Abrevaya-et-al-JBES} consider the inverse probability weighting estimator.
The inverse probability weighting estimator suffers from model misspecification when the propensity score model is misspecified
and from the curse of dimensionality when it is estimated nonparametrically. 
Second, we present a method to construct a uniform confidence band, whereas
\cite{Abrevaya-et-al-JBES} only provide a pointwise confidence interval.

The remainder of the paper is organized as follows. 
Section \ref{sec:method} presents the doubly robust estimation method,
Section \ref{sec:simple:case} gives an informal description of how to construct a two-sided, symmetric uniform confidence band when the dimension of $\mathbf{X}$ is one, 
and Section \ref{sec:general:case} deals with a general case and provides formal theoretical results. In Section \ref{sec:MC}, the results of Monte Carlo simulations demonstrate that in finite samples, our doubly robust estimator works well, and the proposed confidence band has desirable coverage properties. 
Section \ref{sec:EA} gives an empirical application, and Section \ref{sec:C} concludes.
The proofs are contained in Appendix \ref{sec:A}.


\section{Doubly Robust Estimation of the Average Treatment Effect Conditional on 
 Covariates of Interest}\label{sec:method}

In this section, a doubly robust method for estimating the CATEF is proposed.
We first estimate the CATEF for all the covariates using a doubly robust method. 
We then obtain the CATEF for the covariates of interest using a nonparametric approach.

Define:
\begin{align*}
\pi (\mathbf{z}) & \equiv \mathbb{E} \left[ D|\mathbf{Z}=\mathbf{z}\right],  \\
\mu _{j}(\mathbf{z}) & \equiv  \mathbb{E} \left[ Y|\mathbf{Z}=\mathbf{z},\ D=j\right] \ \text{for }j=0,1,
\end{align*}%
where $\pi (\mathbf{z})$ is the propensity score and $\mu_j (\mathbf{z})$ for $j=0,1$ are called regression functions. Note that $\mu_j (\mathbf{z}) = E( Y_j | \mathbf{Z}= \mathbf{z})$ for $j=0,1$ under unconfoundedness.
Let $\pi (\mathbf{z},\beta )\ $and $\mu _{j}(\mathbf{z},\alpha _{j})$ for $j=0,1$ denote
parametric models of $\pi (\mathbf{z})$ and $\mu _{j}(\mathbf{z})$,
respectively.\footnote{$\mu _{j}(\mathbf{z},\alpha _{j})$ may also be called ``marginal structural models'' of \cite{Robins00}.}
A doubly robust procedure requires that either $\pi (\mathbf{z})$ or $\mu _{j}(\mathbf{z})$ for $j=0,1$ 
should be correctly specified, thereby allowing for misspecification in $\pi (\mathbf{z})$ or in $\mu _{j}(\mathbf{z})$. 
Let $\theta_0 \equiv ( \alpha_{10}^\top, \alpha_{00}^\top,\beta_0^\top)^\top$ denote the vector of true or pseudo-true parameter values that optimize some criterion functions.

We consider the augmented inverse probability weighting approach.
Let:%
\begin{eqnarray*}
\psi _{1}(\mathbf{W},\alpha _{1},\beta ) & \equiv&\frac{DY}{\pi (\mathbf{Z},\beta )}-\frac{D-\pi
(\mathbf{Z},\beta )}{\pi (\mathbf{Z},\beta )}\mu _{1}(\mathbf{Z},\alpha _{1}), \\
\psi _{0}(\mathbf{W},\alpha _{0},\beta ) & \equiv&\frac{\left( 1-D\right) Y}{1-\pi (\mathbf{Z},\beta
)}+\frac{D-\pi (\mathbf{Z},\beta )}{1-\pi (\mathbf{Z},\beta )}\mu _{0}(\mathbf{Z},\alpha _{0}), \\
\psi (\mathbf{W},\theta ) & \equiv&\psi _{1}(\mathbf{W},\alpha _{1},\beta )-\psi _{0}(\mathbf{W},\alpha
_{0},\beta ),
\end{eqnarray*}%
where $\mathbf{W} \equiv (Y,\mathbf{Z}^{\top })^{\top }\ $and $\theta \equiv (\alpha _{1}^{\top
},\alpha _{0}^{\top },\beta ^{\top })^{\top }.$ 
The first terms in $\psi_{1}(\mathbf{W},\alpha _{1},\beta ) $ and $\psi_{0}(\mathbf{W},\alpha _{0},\beta ) $ 
correspond to inverse probability weighting. 
The second terms are augmented terms that make the procedure doubly robust.

The following lemma gives 
regularity conditions under which
$g(\mathbf{x})$ is identified. 

\begin{lem}[Identification of the CATEF]\label{iden-lem}
Assume that \eqref{unconfound} holds and $0 < \pi(\mathbf{Z},\beta_0 ) < 1$ almost surely. 
Suppose that either $\beta_0$ satisfies $\mathbb{E}\left[D|\mathbf{Z}\right] = \pi
(\mathbf{Z},\beta_0 )$ almost surely or $\alpha_{10}$ and $\alpha_{00}$ satisfy
$\mathbb{E}\left[Y_1|\mathbf{Z}\right] = \mu _{1}(\mathbf{Z},\alpha _{10})$ 
and $\mathbb{E}\left[Y_0|\mathbf{Z}\right] = \mu _{0}(\mathbf{Z},\alpha _{00})$ almost surely. 
Then: 
\begin{align*}
g(\mathbf{x}) = \mathbb{E}\left[ \psi (\mathbf{W},\theta_0 )|\mathbf{X}=\mathbf{x} \right].\end{align*}
\end{lem}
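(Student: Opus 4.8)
The plan is to prove the claim first conditionally on the full covariate vector $\mathbf{Z}$ and then to descend to $\mathbf{X}$ by the law of iterated expectations, using the fact that $\mathbf{X}$ is a subvector of $\mathbf{Z}$, i.e. $\sigma(\mathbf{X}) \subseteq \sigma(\mathbf{Z})$. The target intermediate statement is
\begin{align*}
\mathbb{E}\left[ \psi(\mathbf{W},\theta_0) \mid \mathbf{Z} \right] = \mathbb{E}\left[ Y_1 - Y_0 \mid \mathbf{Z} \right] \quad \text{almost surely.}
\end{align*}
Granting this, conditioning further on $\mathbf{X} = \mathbf{x}$ and applying iterated expectations gives
\begin{align*}
\mathbb{E}\left[ \psi(\mathbf{W},\theta_0) \mid \mathbf{X} = \mathbf{x} \right]
= \mathbb{E}\left[ \mathbb{E}\left[ \psi(\mathbf{W},\theta_0) \mid \mathbf{Z} \right] \,\Big|\, \mathbf{X} = \mathbf{x} \right]
= \mathbb{E}\left[ \mathbb{E}\left[ Y_1 - Y_0 \mid \mathbf{Z} \right] \,\Big|\, \mathbf{X} = \mathbf{x} \right]
= g(\mathbf{x}),
\end{align*}
which is the assertion.

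To establish the intermediate statement it suffices to treat $\psi_1$ and $\psi_0$ separately and show $\mathbb{E}[\psi_1(\mathbf{W},\alpha_{10},\beta_0) \mid \mathbf{Z}] = \mathbb{E}[Y_1 \mid \mathbf{Z}]$ and $\mathbb{E}[\psi_0(\mathbf{W},\alpha_{00},\beta_0) \mid \mathbf{Z}] = \mathbb{E}[Y_0 \mid \mathbf{Z}]$. For the first, I would rewrite
\begin{align*}
\psi_1(\mathbf{W},\alpha_1,\beta) = \mu_1(\mathbf{Z},\alpha_1) + \frac{D\left( Y - \mu_1(\mathbf{Z},\alpha_1) \right)}{\pi(\mathbf{Z},\beta)},
\end{align*}
note that $DY = DY_1$, and take conditional expectations given $\mathbf{Z}$; since unconfoundedness yields $D \perp Y_1 \mid \mathbf{Z}$, this produces
\begin{align*}
\mathbb{E}\left[ \psi_1(\mathbf{W},\alpha_{10},\beta_0) \mid \mathbf{Z} \right] = \mu_1(\mathbf{Z},\alpha_{10}) + \frac{\mathbb{E}[D \mid \mathbf{Z}]}{\pi(\mathbf{Z},\beta_0)}\left( \mathbb{E}[Y_1 \mid \mathbf{Z}] - \mu_1(\mathbf{Z},\alpha_{10}) \right).
\end{align*}
If $\beta_0$ satisfies $\mathbb{E}[D \mid \mathbf{Z}] = \pi(\mathbf{Z},\beta_0)$ almost surely, the ratio equals one and the right-hand side collapses to $\mathbb{E}[Y_1 \mid \mathbf{Z}]$; if instead $\mu_1(\mathbf{Z},\alpha_{10}) = \mathbb{E}[Y_1 \mid \mathbf{Z}]$ almost surely, the bracketed term is zero and again the right-hand side equals $\mathbb{E}[Y_1 \mid \mathbf{Z}]$. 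A parallel rearrangement, namely $\psi_0(\mathbf{W},\alpha_0,\beta) = \mu_0(\mathbf{Z},\alpha_0) + (1-D)\left( Y - \mu_0(\mathbf{Z},\alpha_0) \right)/\left( 1 - \pi(\mathbf{Z},\beta) \right)$ together with $(1-D)Y = (1-D)Y_0$ and $(1-D) \perp Y_0 \mid \mathbf{Z}$, gives $\mathbb{E}[\psi_0(\mathbf{W},\alpha_{00},\beta_0) \mid \mathbf{Z}] = \mathbb{E}[Y_0 \mid \mathbf{Z}]$ under either specification. Subtracting the two relations yields the intermediate statement.

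I do not expect a serious obstacle here: this is the classical double-robustness computation, and the only genuinely load-bearing hypothesis beyond unconfoundedness is $0 < \pi(\mathbf{Z},\beta_0) < 1$ almost surely, which is what keeps the weights $1/\pi(\mathbf{Z},\beta_0)$ and $1/(1 - \pi(\mathbf{Z},\beta_0))$ finite even in the case where the propensity score model is the misspecified component, so that every conditional expectation above is well defined. One should also keep in the background the integrability conditions (finite first moments of $Y_1$, $Y_0$ and of the weighted terms) needed for those conditional expectations to make sense, and note that the final descent from the $\mathbf{Z}$-level identity to the $\mathbf{X}$-level one uses nothing more than the measurability of $\mathbf{X}$ with respect to $\mathbf{Z}$, so no smoothing or additional regularity enters at that stage.
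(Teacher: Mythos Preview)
Your proposal is correct and follows essentially the same approach as the paper: both use $DY=DY_1$, unconfoundedness to factor $\mathbb{E}[DY_1\mid\mathbf{Z}]=\mathbb{E}[D\mid\mathbf{Z}]\,\mathbb{E}[Y_1\mid\mathbf{Z}]$, and then the two-case argument, with the tower property bringing the result down to $\mathbf{X}$. The only cosmetic differences are that you make the intermediate $\mathbf{Z}$-level identity explicit before descending to $\mathbf{X}$ and you rearrange $\psi_1$ as $\mu_1+D(Y-\mu_1)/\pi$, whereas the paper keeps the original form and conditions on $\mathbf{X}$ in a single line; neither changes the substance.
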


Lemma \ref{iden-lem} suggests that one may estimate 
$g(\mathbf{x})$ by running the nonparametric regression of $\psi (\mathbf{W}, \hat \theta)$
on $\mathbf{X}_i$, where $\hat \theta$ is a consistent parametric estimator of $\theta_0$.
Moreover, this lemma implies that the CATEF can be identified through $\psi (\mathbf{W}, \theta_0)$ if either the regression models ($\mu_1 (\mathbf{z}, \alpha_1)$ and $\mu_0 (\mathbf{z}, \alpha_0)$) or the propensity score model ($\pi (\mathbf{z},\beta)$) is correctly specified (or both). 
That is, even if $\mu_1 (\mathbf{z}, \alpha_1)$ and $\mu_0 (\mathbf{z}, \alpha_0)$ do not represent the true conditional expectation functions, provided that $\pi (\mathbf{z}, \beta)$ is correct, the CATEF is identified. Similarly, even if $\pi (\mathbf{z}, \beta)$ is misspecified, 
provided that $\mu_1 (\mathbf{z}, \alpha_1)$ and $\mu_0 (\mathbf{z}, \alpha_0)$ are correct, 
the CATEF is identified.

\begin{rem}
In this paper, we focus on cases in which $\mathbf{X}$ is continuous. When $\mathbf{X}$ is discrete, the CATEF can be estimated by the sample average of $\psi (\mathbf{W}, \hat \theta)$ using the sub-sample for each possible value of $\mathbf{X}$ and an estimator $\hat \theta$ of $\theta_0$. Moreover, constructing a confidence band is standard when $\mathbf{X}$ takes a finite number of values.
\end{rem}

\subsection{Parametric Estimation of $\theta$}

For concreteness, we consider the following estimation procedure for $\theta_0$.
However, how $\theta_0$ is estimated does not alter our results provided that the rate of convergence is sufficiently fast so that Assumption \ref{main-assumption}(7) given below is satisfied.
For each $j = 0,1$, we estimate $\alpha_j$ by least squares:
\begin{align}\label{ols-est}
\hat  \alpha_j   \equiv \argmin_{\alpha_j}  
\sum_{i=1}^n D_i^j (1-D_i)^{1-j}  [Y_i - \mu_j (\mathbf{Z}_i, \alpha_j)]^2.
\end{align}
We estimate $\beta$ by maximum likelihood (e.g., probit or logit): 
\begin{align}
 \hat  \beta   \equiv \argmax_{\beta}  \sum_{i=1}^n \left(
 D_i \log \pi(\mathbf{Z}_i, \beta) + (1-D_i) \log (1- \pi (\mathbf{Z}_i,\beta))\right).
\end{align}

\begin{rem}
When the dimension of $\mathbf{Z}$ is not too high, an alternative to parametric estimation of 
$\psi (\mathbf{W},\theta_0 )$ is to estimate its nonparametric counterpart via local polynomial estimators
as in \cite{Rothe/Firpo:16}. However, this would not work when the   dimension of $\mathbf{Z}$ is sufficiently high (see related remarks in \cite{Rothe/Firpo:16}).
The latter is the case we focus on in the paper.
\end{rem}

\subsection{Local Linear Estimation of $g$}
We consider a
local linear estimator of $g(\mathbf{x})$.
 Assume that $g(\mathbf{x})$ is twice
continuously differentiable. 
For each $\mathbf{x}=(x_{1},\ldots ,x_{d})$, the local linear estimator of $g(\mathbf{x})$ can be obtained by
minimizing: 
\begin{equation*}
S_{n}(\gamma )\equiv \sum_{i=1}^{n}\left[ \psi (\mathbf{W}_{i},\hat \theta
)- \gamma_0 - \gamma_1 ^{\top }\left(\mathbf{X}_i - \mathbf{x} \right) \right]^2 
\mathbf{K} \left( \frac{\mathbf{X}_i - \mathbf{x}}{h_n} \right)
\end{equation*}%
with respect to $\gamma \equiv (\gamma_0, \gamma_1^\top) ^\top \in \mathbb{R}^{d+1}$, where $\mathbf{K}(\cdot)$ is a kernel function on $\mathbb{R}^d$
and $h_n$ is a sequence of bandwidths. More specifically, let $%
\hat{g}( \mathbf{x})=\mathbf{e}_{1}^{\top }\hat{\gamma}(\mathbf{x})$, where $%
\hat{\gamma}(\mathbf{x})\equiv \arg \min_{\gamma \in \mathbb{R}%
^{d+1}}S_{n}(\gamma )$ and $\mathbf{e}_{1}$ is a column vector
whose first entry is one, and the rest are zero.

\subsection{Effect of First Stage Estimation}

In our setting, we can carry out inference as if $\theta_0$ were known.
This result would not be a surprise given that our first-stage estimation is parametric and our second-stage estimation is nonparametric: the rate of the convergence in the first-stage estimation is faster than that of the second stage. 
This feature of no first-order effect of the first-stage estimation in the second stage turns out to be more general than our setup.
It is indeed closely related to doubly robustness.

If we model $g(\mathbf{x})$ parametrically or more generally approximate $g(\mathbf{x})$ by linear projection, it can be estimated by running an OLS of  $\psi (\mathbf{W},\hat \theta )$ on $\mathbf{X}$.
Because of the built-in feature of double robustness, it can be shown that the limiting distribution of 
 the OLS estimator of $\psi (\mathbf{W},\hat \theta )$ on $\mathbf{X}$ is equivalent to that of the infeasible 
OLS estimator of $\psi (\mathbf{W}, \theta_0 )$ on $\mathbf{X}$.
Furthermore, even if we estimate $\pi(\cdot)$ and $\mu_j(\cdot)$ $(j=0,1)$ nonparametrically when the dimension of $\mathbf{Z}$ is moderate, there will be no estimation effect from the first stage as well.
For example, see \cite{chen2008}, \cite{Rothe/Firpo:16} and \cite{CEIN2016} among others for related results.

\section{An Informal Description of a Uniform Confidence Band}\label{sec:simple:case}

In this section, we provide an informal description of how to construct a two-sided, symmetric uniform confidence band. 
For simplicity, we focus on the leading case where $d=1$.
Let $\mathcal{I} \equiv [a,b]$ denote an interval of interest for which we build a uniform confidence band. Assume that $\mathcal{I}$ is a subset of the support of $X$.
We use nonbold $x$ to mean that $x$ is one-dimensional.

\begin{algorithm}
Carry out the following steps to construct a $(1-\alpha)$ uniform confidence band.
\begin{enumerate}
\item Obtain $\hat{g}(x)$ using a local linear estimator with 
a bandwidth $h_n$ such that: 
\begin{align*}
h_n = \widehat{h} \times  n^{1/5} \times n^{-2/7},
\end{align*}
where $\widehat{h}$ is a commonly used optimal bandwidth in the literature   
(for example, the plug-in method of \cite{Ruppert:Sheather:Wand:1995} which is explained in Appendix \ref{ap-bandwidth}).
We use the Gaussian kernel in our simulations and empirical application.

\item  Obtain the pointwise standard error $\hat{s}(x)/(n h_n)^{1/2}$ of $\hat{g}(x)$ 
by constructing a feasible version of the asymptotic standard error formula:
\begin{align}\label{se-form}
\frac{\hat s(x)}{(nh_n)^{1/2}} \equiv \left\{ [n h_n \hat f_X(x)]^{-1} \int K^2(u) du \, \hat \sigma^2(x) \right\}^{1/2},
\end{align}
where $\hat{f}_X$ is the kernel 
density estimator:
\begin{align*}
 \hat{f}_X (x) = \frac{1}{nh_n} \sum_{i=1}^n K \left(\frac{X_i - x}{h_n} \right),
\end{align*}
and $\hat{\sigma}^2(x)$ is the conditional variance function estimator:
\begin{align}\label{sigma-x-est-dim1}
 \hat{\sigma}^2 (x)
= \frac{1}{(n-\dim (\theta) ) h_n} \sum_{i=1}^N \frac{\hat U_i^2}{\hat{f}_X (x)} K \left( \frac{X_i - x}{h_n}\right).
\end{align}
Here, $\hat U_i = \psi ( \mathbf{W}_i, \hat \theta) - \hat g(X_i)$
and $\dim(\theta)$ is the dimension of $\theta$. 



\item To compute a critical value $c(1-\alpha)$, define:
\begin{equation*}
\lambda \equiv \frac{ - \int K(u) K^{\prime \prime }(u) du }{\int K^2(u) du }.
\end{equation*}
Note that $\lambda =0.5$ if $K (\cdot)$ is the Gaussian kernel.\footnote{Note that $\lambda= 1.98$ for the biweight kernel and $\lambda= 2.5$ for the Epanechnikov kernel.}
 Let:
\begin{equation*}
a_n \equiv a_{n}(\mathcal{I})= \left( 2\log (h_{n}^{-1}(b-a))+2\log \frac{\lambda ^{1/2}}{2\pi }%
\right) ^{1/2}.
\end{equation*}
Now set the critical value for the two-sided symmetric uniform confidence band by: 
\begin{equation*}
c(1-\alpha) \equiv  \left( a_n^2 - 2  \log \{ \log [(1-\alpha )^{-1/2}] \} \right)^{1/2}.
\end{equation*}

\item For each $x \in \mathcal{I}$, we set the two-sided symmetric confidence band:
\begin{align*}
\hat{g}(x) - c(1-\alpha) \frac{\hat{s}(x)}{\sqrt{nh_n}}
\leq
g(x) 
\leq 
\hat{g}(x) + c(1-\alpha) \frac{\hat{s}(x)}{\sqrt{nh_n}}.
\end{align*}
\end{enumerate}
\end{algorithm}

\bigskip

We make some remarks on the proposed algorithm.
In step (1), 
the factor $n^{1/5} \times n^{-2/7}$ is multiplied in the definition of $h_n$ to ensure that the bias is asymptotically negligible by undersmoothing.
In step (2), one can estimate $f_X$ and $\sigma^2(x) \equiv \text{Var}\left[ \psi (\mathbf{W},\theta_0 )|X=x \right]$ using the standard kernel density and regression estimators with the same kernel function $K(\cdot)$ and the same bandwidth $h$ and also with an estimator of $\theta_0$.
In step (3), we may restrict the bandwidth such that $h_{n} \ll (b-a)$ (which is satisfied asymptotically), thereby imposing the condition that $\log (h_{n}^{-1}(b-a))$ is positive.
The critical value proposed in step (3) is strictly positive if $\alpha$ is not too close to one or if $n$ is large enough.

\begin{rem}
It is straightforward to modify the algorithm above to construct one-sided symmetric confidence bands.
Define a new critical value by
\begin{equation*}
c_{\text{one-sided}}(1-\alpha) \equiv  \left( a_n^2 - 2  \log \{ \log [(1-\alpha )^{-1}] \} \right)^{1/2}.
\end{equation*}
Then, for each $x \in \mathcal{I}$, we set the one-sided symmetric confidence bands:
\begin{align*}
\hat{g}(x) - c_{\text{one-sided}}(1-\alpha) \frac{\hat{s}(x)}{\sqrt{nh_n}}
\leq g(x), 
\end{align*}
or 
\begin{align*}
g(x) 
\leq 
\hat{g}(x) + c_{\text{one-sided}}(1-\alpha) \frac{\hat{s}(x)}{\sqrt{nh_n}}.
\end{align*}
\end{rem}

\begin{rem}{\label{rem-algorithm-multivariate}}
 When $x$ is more than one dimension, the algorithm may be revised as follows. 
Obviously, we need to use multivariate kernels and pointwise standard errors should be adjusted because the rate of convergence becomes $nh_n^d$.
The value of $\lambda$ stays the same when we use a product kernel. For example, if $\mathbf{K}$ is the product Gaussian kernel, then $\lambda =0.5$.
The formulas of $a_n$ and $c(1-\alpha)$ need to be changed.
$a_n$ is the largest solution to the following equation:
\begin{align*}
\text{mes}(\mathcal{I}) {h_n}^{-d} \lambda^{d/2} (2\pi)^{-(d+1)/2 }
a_n^{d-1} \exp(-a_n^2/2) = 1,
\end{align*}
where $\text{mes}(\mathcal{I})$ is the Lebesgue measure of $\mathcal{I}$.
When $d =2$, the critical value has the form $c(1-\alpha) \equiv a_n + c/a_n$, where $c$ is the smallest value that satisfies
\begin{align*}
 \exp \left( -2 e^{-c -c^2 / 2 a_n^2}\right)\left( 1+ \frac{c}{a_n^2}\right) \ge 1-\alpha.
\end{align*}
When $d=3$, we have that $c(1-\alpha) \equiv a_n + c/a_n$, where $c$ is the smallest value that satisfies
\begin{align*}
 \exp \left( -2 e^{-c -c^2 / 2 a_n^2}\right)\left( \left( 1+ \frac{c}{a_n^2}\right)^2 - 2 \frac{1}{a_n^2} \right) \ge 1- \alpha.
\end{align*}
We note that in this paper, we assume that $d <4$ (see Assumption \ref{main-assumption}).
\end{rem}

\begin{rem}\label{rem-gumbel}
We may compare our proposal with the critical value based on 
the $(1-\alpha)$ quantile of the Gumbel distribution, which is given by:
\begin{equation*}
c_{\infty}(1-\alpha) \equiv a_{n} + \frac{-\log \{ \log [(1-\alpha)^{-1/2}] \}}{a_{n}}.
\end{equation*}
Note that: 
$$
c_{\infty}(1-\alpha) - c(1-\alpha) = \left[\frac{-\log \{ \log [(1-\alpha)^{-1/2}] \}}{a_{n}}\right]^2,
$$
which is strictly positive for small $\alpha$ but converges to zero as $a_n$ diverges.
Hence, we expect that in finite samples, the confidence band based on $c_{\infty}(1-\alpha)$ is too wide and 
has a higher coverage probability than the nominal level. 
It is shown in the next section that the critical value based on the Gumbel distribution is accurate only up to the logarithmic rate, where our proposed critical value is precise in a polynomial rate. 
This is because our proposal uses a higher-order expansion of \cite{Piterbarg:96},
whose approximation error is of a polynomial rate. 
See Theorem \ref{main-uniform} in Section \ref{sec:general:case} for details.
\end{rem}

\begin{rem}
Our construction of critical values is based on a simple analytic method that is easy to compute. 
Alternatively, one may rely on bootstrap methods to compute critical values for the uniform confidence band.
For example, see \cite{claeskens2003}
for smoothed bootstrap confidence
bands and \cite{CLR} 
 for multiplier bootstrap confidence bands. 
\cite{chernozhukov2013}  
 show that in general settings including high dimensional models, 
Gaussian multiplier bootstrap methods yield critical values for which the approximation error decreases polynomially in the sample size. Roughly speaking, both our simple analytic correction and multiplier bootstrap methods
yield critical values that are accurate at polynomial rates.
A refined theoretical analysis is necessary to determine which type of the critical value is better asymptotically.
\end{rem}

\begin{rem}
The proposed confidence band can be used to test whether the CATEF is constant.
Suppose that our null hypothesis is that $g(\mathbf{x})$ is constant in $ \mathcal{I}$.
This null hypothesis can be written as $g(\mathbf{x}) = g_{\mathcal{I}}$, where $g_{\mathcal{I}}
= \mathbb{E}[ g(\mathbf{x}) | \mathbf{x} \in \mathcal{I}]$.
Since $g_{\mathcal{I}}$ can be estimated at the parametric ($\sqrt{n}\,$) rate and the estimator thus converges faster than $\hat g(\mathbf{x})$, we can ignore the estimation error for $g_{\mathcal{I}}$.
We reject the constancy of $g(\mathbf{x})$, if the confidence band does not include the estimate of $g_{\mathcal{I}}$ for some $\mathbf{x} \in \mathcal{I}$.

\end{rem}

\section{Asymptotic Theory}\label{sec:general:case}

In this section, we establish asymptotic theory.
Let $U \equiv \psi (\mathbf{W},\theta_0 ) - g(\mathbf{X})$ and let 
 $U_i \equiv \psi (\mathbf{W}_i,\theta_0 ) - g(\mathbf{X}_i)$ for $i=1,\ldots,n$. 
Let $\hat s^2 (\mathbf{x})$ be the estimator of the asymptotic variance of $\hat g(\mathbf{x})$.
Let $s_n^2(\mathbf{x})$ denote the population version of the asymptotic variance of the estimator: 
 \begin{align*}
s_n^2(\mathbf{x}) 
\equiv   \frac{1}{h_n^{d}} \mathbb{E}\left[ \frac{U^2}{f^2_\mathbf{X}(\mathbf{x})} \mathbf{K}^2 \left( \frac{\mathbf{X} - \mathbf{x}}{h_n} \right) \right].
\end{align*}
Assume that the $d$-dimensional kernel function is the product of $d$ univariate kernel functions.\
That is, $\mathbf{K}(\mathbf{s}) = \prod_{j=1}^d K(s_j)$, where
$\mathbf{s} \equiv (s_1,\ldots,s_d)$ is a $d$-dimensional vector 
and $K(\cdot)$ is a kernel function on $\mathbb{R}$.
Let $\rho_d (\mathbf{s}) = \prod_{j=1}^d \rho(s_j)$, where:
\begin{align}\label{kl-rho}
\rho(s_j)  \equiv \frac{\int K(u) K(u-s_j) du}{\int K^2(u) du },
\end{align}
for each $j$.
We make the following assumptions.

 \begin{assum}\label{main-assumption}
 Let $d < 4$.
\begin{enumerate}
\item $\mathcal{I} \equiv \prod_{j=1}^d [a_j,b_j]$, where $a_j < b_j, j=1,\ldots,d$, and $\mathcal{I}$ is a strict subset of the support
of $\bm{X}$. 
\item The distribution of $\mathbf{X}$ has a bounded Lebesgue density $f_{\mathbf{X}}(\cdot)$ on $\mathbb{R}^d$.
Furthermore, $f_{\mathbf{X}}(\cdot)$ is bounded below from zero with continuous derivatives on $\mathcal{I}$.
\item The density of $U$ is bounded,  $\mathbb{E}[U^2|\mathbf{X}=\mathbf{x}]$ is continuous on $\mathcal{I}$, and $\sup_{\mathbf{x} \in \mathbb{R}^d} \mathbb{E}[U^4|\mathbf{X}=\mathbf{x}] < \infty$.
\item $g(\cdot)$ is twice continuously differentiable on $\mathcal{I}$.
\item $\mathbf{K}(\mathbf{s}) = \prod_{j=1}^d K(s_j)$, where
 $K(\cdot)$ is a kernel function on $\mathbb{R}$ that has finite support on $[-1,1]$, $\int_{-1}^1 u K(u) du=0$, $\int_{-1}^1  K(u) du=1$,
symmetric around zero, and six times differentiable.
\item $h_n = C n^{-\eta}$, where $C$ and $\eta$ are positive constants such that $\eta < 1/(2d)$ and $\eta > 1/(d+4)$. 
\item $\inf_{n \geq 1} \inf_{\mathbf{x} \in \mathcal{I}} s_n(\mathbf{x}) > 0$ and $s_n(\mathbf{x})$ is continuous for each $n \geq 1$. 
Furthermore, $\mathbf{x} \mapsto \mathbb{E}\left[ U^2 | \mathbf{X} =\mathbf{x} \right]f_{\mathbf{X}}(\mathbf{x})$ is Lipschitz continuous.
\item There exists an estimator $\hat s^2 (\mathbf{x})$ such that
\begin{align*}
\sup_{\mathbf{x} \in \mathcal{I}} \left| \hat s^2 (\mathbf{x}) - s_n^2(\mathbf{x}) \right| = O_p( n^{-c})
\end{align*}
for some constant $c > 0$.

\item $\max\left\{  (n h_n^d)^{1/2}|\psi (\mathbf{W}_{i},\hat \theta) - \psi (\mathbf{W}_{i},\theta_0) |: i=1,\dots,n \right\} = O_p( n^{-c})$ for some constant $c > 0$.
\end{enumerate}
\end{assum}

Most of the assumptions are standard. 
Condition (2) of Assumption \ref{main-assumption} rules out discrete covariates. If all
regressors are discrete, then the  estimation problem reduces to a parametric estimation problem. 
In this case, one may consider a multiple testing approach as in \cite{Lee:Shaikh:2014} by defining subpopulations with observed cells of discrete covariates. If some covariates are discrete and  others are continuous, then one may use a smoothing approach proposed in \cite{Li:Racine:2004}. 

Condition (5) assumes that the kernel function has finite support. This assumption is for 
the simplicity of the paper and can be dropped at the expense of complicated proofs.
It also assumes that the kernel function is differentiable. 
This assumption is crucial and excludes, for example, the uniform kernel.
One of the bandwidth conditions in $h_n$ (that is, $\eta > 1/(d+4)$ in condition (6)) imposes undersmoothing, so that we can ignore the bias asymptotically. 
The rule-of-thumb bandwidth proposed in Section \ref{sec:simple:case} satisfies the required rate conditions.

\begin{rem}
Note that $d < 4$ is necessary to ensure that  $\eta < 1/(2d)$ and $\eta > 1/(d+4)$ can hold jointly.
It is possible to extend our asymptotic theory to the case that $d \geq 4$ using a higher-order local polynomial estimator under stronger smoothness conditions.
In this paper, we limit our attention to the local linear estimator since we are mainly interested in  low dimensional $\bm{x}$.
\end{rem}

\begin{rem}
An estimator of $\hat s^2 (\mathbf{x})$ is readily available. For example, we may consider
 \begin{align}\label{sigma-x-est-dim-d}
\hat s^2(\mathbf{x}) 
=
  \frac{\hat{\sigma}^2(\mathbf{x})}{ \hat f_\mathbf{X}(\mathbf{x})} 
\int \mathbf{K}^2 \left( \mathbf{u} \right) d \mathbf{u},
\end{align}
where $\hat{f}_\mathbf{X}(\cdot)$ is the kernel density estimator
and 
$\hat{\sigma}^2(\mathbf{x})$ is a nonparametric estimator  of 
${\sigma}^2(\mathbf{x})$  using
$\{ (\hat{U}_i^2, \mathbf{X}_i): i=1,\ldots,n \}$ with 
$\hat{U}_i \equiv \psi (\mathbf{W}_i,\hat{\theta} ) - \hat{g}(\mathbf{X}_i)$.
Recall \eqref{sigma-x-est-dim1} for its concrete form for the one-dimensional case.
Alternatively, we may set
 \begin{align*}
\hat s^2(\mathbf{x}) 
=   \frac{1}{nh_n^{d}} \sum_{i=1}^n  \frac{\hat{U}_i^2}{\hat{f}^2_\mathbf{X}(\mathbf{x})} \mathbf{K}^2 \left( \frac{\mathbf{X} - \mathbf{x}}{h_n} \right).
\end{align*}
For either estimator, it is straightforward to verify condition (8) of Assumption \ref{main-assumption} using the standard results in kernel estimation.
\end{rem}

\begin{rem}
Note that Condition (9) of Assumption \ref{main-assumption} is merely a sufficient (but not necessary) condition.
This condition is satisfied, for example, if
$\| \hat \theta - \theta_0 \| = O_p( n^{-1/2})$,
functions $\beta \mapsto \pi (\mathbf{Z},\beta)$
and
$\alpha_j \mapsto \mu _{j}(\mathbf{Z},\alpha _{j}), j =0,1,$ are
Lipschitz continuous,
$\pi (\mathbf{Z},\beta_0 )$ is bounded between $\epsilon$ and $1-\epsilon$ for some 
constant $\epsilon > 0$, provided that 
some weak moment conditions on $(Y, \mathbf{Z})$ hold. 
 \end{rem}

Let $a_n \equiv a_n(\mathcal{I})$ be the largest solution to the following equation:
\begin{align}  \label{b1}
\text{mes}(\mathcal{I}) {h_n}^{-d} \lambda^{d/2} (2\pi)^{-(d+1)/2 }
a_n^{d-1} \exp(-a_n^2/2) = 1,
\end{align}
where $\text{mes}(\mathcal{I})$ is the Lebesgue measure of $\mathcal{I}$; that is, 
$\text{mes}(\mathcal{I}) = \prod_{j=1}^d (b_j-a_j)$ and: 
\begin{equation}\label{def-lambda}
\lambda = \frac{ - \int K(u) K^{\prime \prime }(u) du }{\int K^2(u) du }.
\end{equation}
The following is the main theoretical result of our paper.

\begin{thm}\label{main-uniform}
Let Assumption \ref{main-assumption} hold. Then there exists $\kappa > 0$ such that, 
uniformly in $t$, on any finite  interval:
\begin{align}\label{main-approx}
\begin{split}
&\mathbb{P}\left( a_n  
\left[  \max_{\mathbf{x} \in \mathcal{I}} \left| \frac{\hat{g}(\mathbf{x}) -  g(\mathbf{x})}{\hat{s}(\mathbf{x})} \right|  - a_n  \right]  < t \right) = \\
&\exp \left( - 2e^{-t - t^2/2a_n^2} \right)
\sum_{m=0}^{\lfloor (d-1)/2 \rfloor} h_{m,d-1} a_n^{-2m}
\left( 1 + \frac{t}{a_n^2} \right)^{d-2m-1}
+ O( n^{-\kappa}),
\end{split}
\end{align}
as $n \rightarrow \infty$, where $h_{m,d-1}  \equiv \frac{(-1)^m (d-1)!}{ m! 2^m (d-2m-1)!}$
and $\lfloor \cdot \rfloor$ is the integer part of a number.  
\end{thm}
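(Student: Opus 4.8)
The plan is to pass from the feasible studentized process $\mathbf{x}\mapsto(\hat g(\mathbf{x})-g(\mathbf{x}))/\hat s(\mathbf{x})$ to an idealized, asymptotically locally stationary Gaussian field through a chain of approximations, each contributing only a polynomial-in-$n$ error, and then to invoke a higher-order Piterbarg-type expansion for the excursion probability of that field. The first step is to strip away the estimation of the nuisance parameter and of the variance. By Assumption \ref{main-assumption}(9), $\max_i (nh_n^d)^{1/2}\abs{\psi(\mathbf{W}_i,\hat\theta)-\psi(\mathbf{W}_i,\theta_0)}=O_p(n^{-c})$; since the local-linear weights have $\ell_1$-norm bounded uniformly on $\mathcal{I}$, this propagates to $\sup_{\mathbf{x}\in\mathcal{I}}(nh_n^d)^{1/2}\abs{\hat g(\mathbf{x})-\tilde g(\mathbf{x})}=O_p(n^{-c})$, where $\tilde g$ is the infeasible estimator built from $\psi(\mathbf{W}_i,\theta_0)$. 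Combined with Assumption \ref{main-assumption}(8) and the lower bound in (7), one may further replace $\hat s(\mathbf{x})$ by $s_n(\mathbf{x})$ at the cost of an $O_p(n^{-c'})$ term. Because $a_n\asymp(\log n)^{1/2}$, multiplying by $a_n$ still leaves something $O_p(n^{-\kappa_1})$, and an anti-concentration bound for the supremum of the leading Gaussian field (Step three) converts this into an $O(n^{-\kappa_1})$ perturbation of the distribution function on the left of \eqref{main-approx}, uniformly in $t$.

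Second, I would discard the bias and linearize. A standard local-linear expansion under Assumption \ref{main-assumption}(2),(4),(5) writes $\tilde g(\mathbf{x})-g(\mathbf{x})=B_n(\mathbf{x})+L_n(\mathbf{x})+R_n(\mathbf{x})$, with $B_n(\mathbf{x})=O(h_n^2)$ uniformly (the second-order bias), $L_n(\mathbf{x})\equiv(nh_n^d f_{\mathbf{X}}(\mathbf{x}))^{-1}\sum_{i=1}^n U_i\,\mathbf{K}((\mathbf{X}_i-\mathbf{x})/h_n)$ the leading stochastic term, and $R_n$ collecting lower-order remainders from the randomness of the design denominator and the off-diagonal blocks. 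The undersmoothing condition $\eta>1/(d+4)$ in Assumption \ref{main-assumption}(6) is precisely what forces $a_n(nh_n^d)^{1/2}\sup_{\mathbf{x}}\abs{B_n(\mathbf{x})}/s_n(\mathbf{x})\to 0$ at a polynomial rate, while maximal inequalities over the VC-type class $\set{\mathbf{K}((\cdot-\mathbf{x})/h_n):\mathbf{x}\in\mathcal{I}}$ — here I would invoke the empirical-process arguments of \cite{Ghosal/Sen/vanderVaart:00} — bound $a_n(nh_n^d)^{1/2}\sup_\mathbf{x}\abs{R_n(\mathbf{x})}/s_n(\mathbf{x})$ by a power of $n$ as well. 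This reduces everything to the normalized leading term $L_n(\mathbf{x})/s_n(\mathbf{x})$.

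Third, I would apply the Gaussian approximation. Viewing $L_n(\mathbf{x})/s_n(\mathbf{x})$ as an empirical process indexed by $\mathbf{x}\in\mathcal{I}$ over a VC-type class with envelope of order $h_n^{-d/2}$, the coupling inequality of \cite{CCK:2014} produces, on a suitable probability space, a centered Gaussian process $G_n$ with the same covariance such that $\sup_{\mathbf{x}\in\mathcal{I}}\abs{L_n(\mathbf{x})/s_n(\mathbf{x})-G_n(\mathbf{x})}=O_p(n^{-\kappa_2})$ for some $\kappa_2>0$, using the fourth-moment bound in Assumption \ref{main-assumption}(3). The covariance satisfies $\mathbb{E}[G_n(\mathbf{x})G_n(\mathbf{y})]=\rho_d((\mathbf{x}-\mathbf{y})/h_n)+o(1)$ uniformly on $\mathcal{I}$, where the curvature of $\rho_d$ at the origin is governed by $\lambda$ in \eqref{def-lambda}; here the six-times differentiability of $K$ in Assumption \ref{main-assumption}(5) and the Lipschitz continuity of $\mathbf{x}\mapsto\mathbb{E}[U^2\mid\mathbf{X}=\mathbf{x}]f_{\mathbf{X}}(\mathbf{x})$ in (7) enter. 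As in the first step, anti-concentration for $\sup\abs{G_n}$ turns the coupling error into a polynomial perturbation of the c.d.f., uniformly in $t$, so that it remains only to expand $\mathbb{P}(\max_{\mathbf{x}\in\mathcal{I}}\abs{G_n(\mathbf{x})}<a_n+t/a_n)$.

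Fourth — the crux — I would invoke the higher-order extreme-value expansion of \cite{Piterbarg:96}. Under the substitution $\mathbf{x}=\mathbf{x}_0+h_n\mathbf{u}$, $G_n$ becomes locally a unit-variance Gaussian field on the expanding rectangle $h_n^{-1}\mathcal{I}$ whose correlation behaves like $1-\tfrac{\lambda}{2}\abs{\mathbf{u}}^2+o(\abs{\mathbf{u}}^2)$ near $\mathbf{u}=0$, i.e. a field in the Pickands–Piterbarg regime, and the boundary contribution of $\partial\mathcal{I}$ is negligible relative to the bulk because $\text{mes}(\partial\mathcal{I})=0$. Carrying Piterbarg's sharp asymptotics for the excursion probability to the requisite order yields exactly the right-hand side of \eqref{main-approx}: the leading double-exponential $\exp(-2e^{-t-t^2/2a_n^2})$, the dimension-dependent polynomial correction $\sum_{m=0}^{\lfloor(d-1)/2\rfloor}h_{m,d-1}a_n^{-2m}(1+t/a_n^2)^{d-2m-1}$ coming from Taylor-expanding the volume factor $a_n^{d-1}$ about the level $a_n+t/a_n$, and an error that is $O(n^{-\kappa_3})$ since Piterbarg's expansion has polynomially small error in the diverging level $a_n$; equation \eqref{b1} is precisely the normalization making the leading constant equal one, and $d<4$ keeps $\lfloor(d-1)/2\rfloor$ finite and the window $1/(d+4)<\eta<1/(2d)$ nonempty. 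Collecting the four displays with $\kappa=\min\set{\kappa_1,\kappa_2,\kappa_3}$ gives \eqref{main-approx} uniformly in $t$ on compacts. I expect this last step to be the main obstacle: one must check that the Piterbarg machinery applies to a field that is only \emph{asymptotically} locally stationary after rescaling, and — crucially — render the errors in the higher-order Pickands-type constants explicit as powers of $n$ rather than merely $o(1)$; a secondary technical point is preserving uniformity in $t$ through all of the anti-concentration passages.
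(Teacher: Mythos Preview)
Your proposal tracks the paper's proof closely: linearization of the local-linear estimator (the paper's Lemma \ref{KLX-lemma}), Gaussian coupling via \cite{CCK:2014} (the paper's Lemma \ref{cck-thm}), and then the Piterbarg expansion. You also correctly identify the passage to Piterbarg's theorem as the delicate point.

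The one substantive difference is how non-stationarity is dispatched. You propose to apply Piterbarg's machinery directly to the coupled field $G_n$, which after rescaling is only \emph{asymptotically} locally stationary, and you flag this as the main obstacle. The paper does not attempt that. Instead, it inserts an explicit intermediate step (Lemma \ref{step2-thm}): using the Brownian-bridge comparison technique of \cite{Ghosal/Sen/vanderVaart:00}, it constructs a genuinely \emph{homogeneous} Gaussian field $\tilde B_{n,2}$ on $h_n^{-1}\mathcal{I}$ with covariance exactly $\rho_d(\mathbf{s}-\mathbf{s}')$ and shows $\sup_{\mathbf{x}}\abs{\tilde B_{n,1}(\mathbf{x})-\tilde B_{n,2}(h_n^{-1}\mathbf{x})}=O_p(h_n\sqrt{\log h_n^{-d}})$, which is polynomial under Assumption \ref{main-assumption}(6),(7). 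Piterbarg's Theorem 14.3 and \cite{Konakov:Piterbarg:1984} then apply off-the-shelf to this homogeneous field; the six-times differentiability of $K$ ensures $\tilde B_{n,2}$ is three-times mean-square differentiable, which is what those theorems require. So the difficulty you anticipated is resolved not by strengthening Piterbarg's result to cover near-stationary fields, but by reducing to a truly stationary one at polynomial cost. Note also that you invoke \cite{Ghosal/Sen/vanderVaart:00} in Step two for empirical-process maximal inequalities, whereas the paper uses it in this Gaussian-to-Gaussian comparison; the linearization in the paper's Lemma \ref{KLX-lemma} is handled by standard uniform kernel-regression arguments in the style of Ruppert--Wand.

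A minor point in your favor: your explicit invocation of anti-concentration to convert $O_p(n^{-c})$ sup-errors into $O(n^{-\kappa})$ perturbations of the distribution function is something the paper's proof glosses over.
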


Notice that the approximation error is of a polynomial rate. 
As a result, a critical value based on the leading term of the right-hand side of
\eqref{main-approx} provides a better approximation than one based on the Gumbel approximation.
The result in Theorem \ref{main-uniform} may be of independent interest for constructing the uniform confidence band in nonparametric regression beyond the scope of estimating the CATEF in our context.

\begin{rem}
In a setting different from here,  \cite{Lee/Linton/Whang:06} propose analytic critical values based on \cite{Piterbarg:96} in order to test for stochastic monotonicity, compare its performance with the bootstrap critical values in their Monte Carlo experiments, and find that both perform well in finite samples.
However, the discussions in \cite{Lee/Linton/Whang:06} are informal and rely on the results of Monte Carlo experiments without the formal proof of establishing the polynomial approximation error. 
\end{rem}

The conclusion of the theorem can be simplified for special cases. 
In particular, if $d=1$, then:
\begin{align*}
\mathbb{P}\left( a_n  
\left[  \max_{\mathbf{x} \in \mathcal{I}} \left| \frac{\hat{g}(\mathbf{x}) -  g(\mathbf{x})}{\hat{s}(\mathbf{x})} \right|  - a_n  \right]  < t \right) = \exp \left( - 2e^{-t - t^2/2a_n^2} \right)
+ O( n^{-\kappa}),
\end{align*}
where $a_n$ is the largest solution to  
$\text{mes}(\mathcal{I}) {h_n}^{-1} \lambda^{1/2} (2\pi)^{-1 }\exp(-a_n^2/2) = 1$. 
Also, if $d=2$, then:
\begin{align*}
\mathbb{P}\left( a_n  
\left[  \max_{\mathbf{x} \in \mathcal{I}} \left| \frac{\hat{g}(\mathbf{x}) -  g(\mathbf{x})}{\hat{s}(\mathbf{x})} \right|  - a_n  \right]  < t \right) = \exp \left( - 2e^{-t - t^2/2a_n^2} \right)\left( 1 + \frac{t}{a_n^2} \right)
+ O( n^{-\kappa}),
\end{align*}
where $a_n$ is the largest solution to  
 $\text{mes}(\mathcal{I}) {h_n}^{-2} \lambda^{2} (2\pi)^{-3/2 }
a_n \exp(-a_n^2/2) = 1$. 

\begin{rem}
It is standard to obtain pointwise confidence intervals based on normal approximations. 
Recall that our two-sided symmetric uniform confidence band has the form: 
\begin{align}\label{pointwise-ci}
\hat{g}(\mathbf{x}) - c(1-\alpha) \frac{\hat{s}(\mathbf{x})}{\sqrt{nh_n^d}}
\leq
g(x) 
\leq 
\hat{g}(\mathbf{x}) + c(1-\alpha) \frac{\hat{s}(\mathbf{x})}{\sqrt{nh_n^d}},
\end{align}
where $c(1-\alpha)$ is obtained from Theorem \ref{main-uniform}.
To obtain    two-sided symmetric pointwise confidence intervals, we just need to replace 
$c(1-\alpha)$ with the usual normal critical value $\Phi^{-1}(1-\alpha/2)$, where $\Phi(\cdot)$ is the standard normal cumulative distribution function. 
The pointwise confidence interval given in \eqref{pointwise-ci} is different from the one resulting from 
\citet[Theorem 2]{Abrevaya-et-al-JBES} in terms of the formula for $\hat{s}(\mathbf{x})$ in \eqref{sigma-x-est-dim-d}. In their case, they need to estimate 
$\hat{\sigma}^2(\mathbf{x})$   using
$\{ (\tilde{U}_i^2, \mathbf{X}_i): i=1,\ldots,n \}$ with 
\begin{align*}
\tilde{U}_i \equiv  \frac{D_i Y_i}{\pi (\mathbf{Z}_i,\hat{\beta} )}- \frac{\left( 1-D_i\right) Y_i}{1-\pi (\mathbf{Z}_i,\hat{\beta})}
\hat{g}(\mathbf{X}_i).
\end{align*}%
\end{rem}

\begin{rem}
 A one-sided version of the uniform confidence band is readily available. Combining Theorems 14.1 and 14.2 of \cite{Piterbarg:96}
with arguments identical to those used in the proof of Theorem \ref{main-approx-one-sided} yields the following proposition. Under Assumption \ref{main-assumption},  there exists $\kappa > 0$ such that, 
uniformly in $t$, on any finite  interval:
\begin{align}\label{main-approx-one-sided}
\begin{split}
&\mathbb{P}\left( a_n  
\left[  \max_{\mathbf{x} \in \mathcal{I}}  \frac{\hat{g}(\mathbf{x}) -  g(\mathbf{x})}{\hat{s}(\mathbf{x})} - a_n  \right]  < t \right) = \\
&\exp \left( - e^{-t - t^2/2a_n^2} \right)
\sum_{m=0}^{\lfloor (d-1)/2 \rfloor} h_{m,d-1} a_n^{-2m}
\left( 1 + \frac{t}{a_n^2} \right)^{d-2m-1}
+ O( n^{-\kappa}),
\end{split}
\end{align}
as $n \rightarrow \infty$.  
Note that the only differences between \eqref{main-approx} and \eqref{main-approx-one-sided} are that 
(i) there is no absolute value on the left side of the equation in \eqref{main-approx-one-sided}
and (ii) there is no factor 2 inside the exponential function in \eqref{main-approx-one-sided}. 
Hence, for example, if $d=1$, then:
\begin{align*}
\mathbb{P}\left( a_n  
\left[  \max_{\mathbf{x} \in \mathcal{I}}  \frac{\hat{g}(\mathbf{x}) -  g(\mathbf{x})}{\hat{s}(\mathbf{x})}   - a_n  \right]  < t \right) = \exp \left( - e^{-t - t^2/2a_n^2} \right)
+ O( n^{-\kappa}).
\end{align*} 
\end{rem}

\subsection{Construction of critical values}

We use the leading term on the right-hand side of \eqref{main-approx}
as a distribution-like function to construct a uniform confidence band. 
For example, if $d=1$, we may construct a critical value $c(1-\alpha)$ that satisfies:
\begin{align*}
F_{n,1}(c) \geq 1 - \alpha,
\end{align*}
where $F_{n,1}(t) \equiv \exp \left( - 2e^{-t - t^2/2a_n^2} \right)$.
This yields the critical value presented in the Algorithm of Section \ref{sec:simple:case}.
Similarly, if $d=2$, we can use: 
\begin{align*}
F_{n,2}(c) \geq 1 - \alpha,
\end{align*}
where $F_{n,2}(t) \equiv \exp \left( - 2e^{-t - t^2/2a_n^2} \right)\left( 1 + \frac{t}{a_n^2} \right)$.
In finite samples, it might be useful to impose monotonicity of $F_{n,j}(\cdot)$
by rearrangement 
(see, e.g., \cite{CF-VG:09}).

\begin{rem}
Theorem \ref{main-uniform} implies that: 
\begin{align*}
&\lim_{n \rightarrow \infty} \mathbb{P}\left( a_n  
\left[  \max_{\mathbf{x} \in \mathcal{I}} \left| \frac{\hat{g}(\mathbf{x}) -  g(\mathbf{x})}{\hat{s}(\mathbf{x})} \right| - a_n  \right]  < t \right) = \exp \left( - 2e^{-t } \right).
\end{align*}
Thus, one may construct analytical critical values based on the Gumbel distribution. However, this approximation is accurate only up to the logarithmic rate in view of Theorem \ref{main-uniform}. 
\end{rem}

\section{Monte Carlo Experiments}\label{sec:MC}

In this section, we present the results of Monte Carlo experiments.
These experiments are conducted to see the finite sample performances of the proposed doubly robust estimator and the proposed uniform confidence band.
The simulations are conducted by R 3.3.1 with Windows 10.
The number of replications is $5000$.

\subsection{Data generating process}

The data generating process follows the potential outcome framework.
The notations for the variables are the same as those used in the theoretical part of the paper.
We consider cases with $p=10, 30$ and $N= 500, 2000$.

The data generating process is the following.
The vector of covariates $\mathbf{Z}= (X_1, \dots, X_p)^{\top} $ is generated by:
$\mathbf{Z} \sim N(0,I_p)$, where $I_p$ is the $p$-dimensional identity matrix.
The potential outcomes are generated by:
\begin{align*}
 Y_1 = 10 + \sum_{k=1}^p \frac{1}{\sqrt{p}} X_k + v, \quad Y_0 =0,
\end{align*}
where $v \sim N(0, 1)$ and $v$ is independent of $\mathbf{Z}$.
The treatment status $D$ is generated by:
\begin{align*}
 D =\mathbf{1} \left\{ \Lambda \left(\sum_{k= p/2}^p \frac{1}{\sqrt{p/2}}X_k \right) > U\right\},
\end{align*}
where $U \sim U[0,1]$, $U$ is independent of $(\mathbf{Z}^{\top}, v)$
and $\Lambda$ is the logistic function. 
Thus, the propensity score is $\pi(\mathbf{Z}) = \Lambda \left(\sum_{k= p/2}^p X_k/ \sqrt{p/2}\right)$.
The observed outcome is $Y=DY_1$.

The parameter of interest is the CATEF for $\mathbf{X} =X_1$.
In our specification, the CATEF 
can be written as:
\begin{align*}
 CATEF (x_1) = 10 + x_1/\sqrt{p} .
\end{align*}
We examine the performance of various statistical procedures regarding this CATEF.

\subsection{Model specification}

To estimate and conduct statistical inferences on $CATEF(x_1)$ using our doubly robust procedure, we need to specify a model for the regression $\mu_j (\mathbf{z})$ for $j=0,1$ and a model for the propensity score $\pi (\mathbf{z})$.
We consider two regression models and two propensity score models.
One of two models is correctly specified, but the other model is misspecified.
We note that our doubly robust procedure is predicted to work well provided that at least one of the regression model and the propensity score model is correctly specified.

We first discuss the model specifications for the regression part.
The first regression model is: 
\begin{align*}
\mu_1 (\mathbf{z}, \alpha_1) = \alpha_{10} + \sum_{k=1}^p \alpha_{1k} X_k, \quad
\mu_0 (\mathbf{z}, \alpha_0) = \alpha_{00} + \sum_{k=1}^p \alpha_{0k} X_k .
 \end{align*}
This model is correctly specified. 
The coefficients are estimated by OLS using $(1, X_1, \dots, X_p)$ as the explanatory variable.
The second regression model, which is misspecified, is: 
\begin{align*}
\mu_1 (\mathbf{z}, \alpha_1)  = \alpha_{10} + \sum_{k=1}^{p/2} \alpha_{1k} X_k ,\quad 
\mu_0 (\mathbf{z}, \alpha_0)  = \alpha_{00} + \sum_{k=1}^{p/2} \alpha_{0k} X_k .
 \end{align*}
The model is estimated by OLS using $(1, X_1, \dots, X_{p/2})$ as explanatory variables.
This model is misspecified because it suffers from sample selection bias introduced by omitting the second half of the regressors which affects the treatment status.

We also consider two models for propensity score. 
The model for propensity score is:
\begin{align*}
 \pi (\mathbf{z}, \beta) = \Lambda \left(\beta_0 + \sum_{k=1}^p \beta_k X_k\right).
\end{align*}
The misspecified model is: 
\begin{align*}
\pi(\mathbf{z},\beta) = \Lambda \left(\beta_0 + \sum_{k=1}^{p/2} \beta_k X_k \right).
\end{align*}
Similarly to the case of the regression part, misspecification is introduced by omitting the second half of the regressors.
The models for propensity score are estimated by maximum likelihood.

We estimate $CATEF(x_1)$ for $x_1 \in \{ -1, -0.5, 0, 0.5, 1\}$ and compute the mean bias (``MEAN''), standard deviation (``SD''), the average of standard error for $\widehat{CATEF} (x_1)$ (``ASE''), and the root mean squared error (``RMSE'').
The local linear regression is conducted with the Gaussian kernel, and the preliminary bandwidth ($\hat h$ in Algorithm (1)) is chosen by the method of \cite{Ruppert:Sheather:Wand:1995}.
We also compute the ``BIAS'', ``SE'' and ``RMSE'' of the corresponding inverse probability weighting estimators and the regression adjustment estimators. 
Note that the difference between the proposed method and those alternative methods arises only in the estimation of $\psi (\mathbf{W}, \theta_0)$ and the other steps are the same.

We examine the coverage probability of the uniform confidence band for $CATEF (x_1)$ for the range $-1 \le x_1 \le 1$.
The nominal coverage probabilities that we consider are 99\%, 95\% and 90\%.
We compute the empirical coverage (``CP''), the mean critical value (``Mcri''), and the standard deviation of critical value (``Sdcri'').
We also compute the coverage probabilities of the confidence band based on the critical values computed by the Gumbel approximation (``GCP'').

\subsection{Results}

Tables \ref{tab-est-k10} and \ref{tab-est-k30} summarize the results on the properties of the estimators.
In both tables, DR refers to our doubly robust method, whereas IPW and RA correspond to the inverse probability weighting and regression adjustment methods, respectively. 
The proposed doubly robust estimator of the CATEF works well in finite samples. 
As the theory indicates, the proposed estimator exhibits small bias provided that at least one of the regression model and the propensity score model is correctly specified. 
We find that the regression adjustment estimator is very precise when the regression model is correctly specified.
However, it suffers from substantial bias when the regression model is misspecified.
The inverse probability weighting estimator also suffers from model misspecification.
Moreover, its standard deviation is much larger than those of the doubly robust and regression adjustment estimators.
When both models are misspecified, all three estimators suffer from heavy bias.
The inverse probability weighting estimator has the largest RMSE because its distribution is more diverse than those of the other two estimators.
All the estimators have larger standard deviations when $x=1$ and $x=-1$ compared to those in other points. 
This is because the number of observations around $x=1$ or $x=-1$ is expected to be smaller than that around, for example, $x=0$ which is the center of the distribution.
On the other hand, the magnitude of the bias does not vary much across data points.
The standard error for the proposed doubly robust estimator is slightly smaller than the standard deviation, but the difference is not large.

Tables \ref{tab-cb-k10} and \ref{tab-cb-k30} summarize the finite sample properties of the proposed 
uniform confidence band.
 The results show that our uniform confidence band has a reasonably good coverage property provided that one of the models is correctly specified.
When both models are misspecified, the size distortion is heavy. 
We observe that the size distortion is heavier when the regression model is misspecified than that in the case of propensity score misspecification.
This result indicates that we should carefully model the regression part in order to obtain reliable confidence bands.
The average values of the 95\% critical values are around 2.75. Because the pointwise critical value is 1.96 and is much smaller than the uniformly valid critical value, it demonstrates the importance of the uniform property of confidence band.
The standard deviations of the critical values are small because they change only if the bandwidth changes. 
The confidence band based on the Gumbel approximation is very conservative.

The results of the Monte Carlo simulation confirm that the proposed doubly robust estimator indeed works well in finite samples provided that one of the regression and propensity score models is correctly specified.
The proposed uniform confidence band also has good coverage properties.

\section{An Empirical Application}\label{sec:EA}

We apply our uniformly valid confidence band for the CATEF for the effect of maternal smoking
 on birth weight where the argument of the CATEF is the mother's age.
Our aim here is to illustrate our confidence band in comparison with alternative confidence bands.
We first discuss the background of this application and the datasets used.
We use two different data sets: the dataset from Pennsylvania and that from North Carolina.
We then compute various confidence bands for the CATEF and discuss the results.

While the purpose of this application is to illustrate our uniformly valid confidence band and not to present new insights on the effect of smoking, it is still informative to discuss the background of this application. 
Many studies document that low birth weight is associated with prolonged negative effects on health and educational or labor market outcomes throughout life, although there has been a debate over its magnitude. See, e.g., \cite{AlmondCurrie11} for a review. 
Maternal smoking is considered to be the most important preventable negative cause of low birth weight \citep{Kramer87}.
There are many studies that evaluate the effect of maternal smoking on low birth weight \citep{AlmondCurrie11}.
The program evaluation approach is employed by, for example, 
\cite{AlmondChayLee05}, \cite{daVeigaWilder08} and
\cite{WalkerTekinWallace09}, and panel data analysis is carried out by 
\cite{Abrevaya06} and \cite{AbrevayaDahl08}.
Here, we are interested in how the effect of smoking changes across different age groups of mothers.
\cite{WalkerTekinWallace09} examine whether the effect of smoking is larger for teen mothers than for adult mothers and find mixed evidence.
\cite{Abrevaya-et-al-JBES} also consider this problem in their application.

\subsection{Pennsylvania data}
The first dataset consists of observations from white mothers in Pennsylvania in the USA.
The dataset is an excerpt from \cite{Cattaneo10} and is obtained from the STATA website (``\url{http://www.stata-press.com/data/r13/cattaneo2.dta}''). 
Note that the dataset was originally used in \cite{AlmondChayLee05}. 
We restrict our sample to white and non-Hispanic mothers, and the sample size is 3754.
The outcome of interest ($Y$) is infant birth weight measured in grams.
The treatment variable ($D$) is a binary variable that is equal to 1 if the mother smokes and 0 otherwise.
The set of covariates $\mathbf{Z}$ includes the mother's age, an indicator variable for alcohol consumption during pregnancy, an indicator for the first baby, the mother's educational attainment, an indicator for the first prenatal visit in the first trimester, the number of prenatal care visits, and an indicator for whether there was a previous birth where the newborn died.
We are interested in how the effect of smoking varies across different values of the mother's age.
Therefore, $\mathbf{X}$ is mother's age in this application.

To estimate the CATEF, we use linear regression models for the regression part and a logit model for propensity score.
The explanatory variables used in the regression models and the logit model 
consist of all the elements of $\mathbf{Z}$, the square of the mother's age, and the interaction terms between the mother's age and all other elements of $\mathbf{Z}$. 
We estimate the CATEF in the interval between ages 15 and 35.

We compute the following three 95\% confidence bands for the CATEF.
``Our CB'' is the confidence band proposed in this paper.
Because $\mathbf{X}$ is univariate in this application, we follow the algorithm in Section 3.
We use the Gaussian kernel.
The preliminary bandwidth ($\hat h$) is chosen by the method of \cite{Ruppert:Sheather:Wand:1995}.
``Gumbel CB'' is the confidence band in which $c(1-\alpha)$ in the algorithm is replaced by that based on the Gumbel approximation (see Remark \ref{rem-gumbel}).
``PW CB'' is a pointwise valid confidence band where we replace $c(1-\alpha)$ in the algorithm by the corresponding value from the standard normal distribution (i.e., $1.96$). 
This provides a valid confidence interval for each point of the CATEF.
However, its uniform coverage rate would be smaller than 95\%.

Figure 1 plots the estimated CATEF and the three 95\% confidence bands for the range between 
15 and 35 years of age. 
The figure also contains the average treatment effect estimate (AIPW estimate) for a reference.

The widths of the three confidence bands are substantially different.
The confidence band based on the Gumbel approximation provides the widest band and may not be very informative.
The confidence band that is valid only in a pointwise sense gives the narrowest band. 
This band is not uniformly valid and so may provide misleading information about the CATEF. On the other hand, this provides valuable information if we are interested at a particular point of the CATEF.
The confidence band we propose lay between ``Gumbel CB'' and ``PW CB''. 
While this band is wider than ``PW CB'', it is much narrower than ``Gumbel CB'' and is uniformly asymptotically valid. 
We see from this figure that our confidence band is informative while being uniformly valid.

The estimated CATEF is decreasing from 15 to around 25 years of age. It is rather stable for the range above 25 years of age. 
All confidence bands indicate that the CATEF is estimated imprecisely near the ends of the range.
Nonetheless, the estimated CATEF indicates that smoking may not have a strong impact when the mother is young.
The CATEF is estimated relatively precisely in the middle of the range.
For the range between 20 and 30 years of age, even the band based on the Gumbel approximation, which is the widest, does not contain 0.
This result provides robust evidence that smoking has a negative impact on birth weight at least for mothers who are 20 to 30 years old.
In this particular dataset, the statistical evidence against a constant smoking effect is somewhat weak. 
The confidence band that is valid only in a pointwise sense may provide an impression that the smoking effect depends on the mother's age.
However, the uniformly valid confidence band that we propose marginally contains the straight line 
that is equal to the ATE estimate.
This result illustrates that there is a caveat when we use pointwise confidence intervals, as well as the importance of using uniformly valid confidence bands.

\subsection{North Carolina data}

The second dataset is based on the records between 1988 and 2002 by the North Carolina State Center Health Services. 
This dataset is used in \cite{Abrevaya-et-al-JBES} and obtained from Robert Lieli's website (``\url{http://www.personal.ceu.hu/staff/Robert_Lieli/cate-birthdata.zip}).
We restrict our sample to white and first--time mothers, and the sample size is 433,558.
As in the case of the Pennsylvania data, the outcome is infant birth weight measured in grams
and the treatment variable is an indicator for smoking status.
The set of covariates $\mathbf{Z}$ includes those used in the analysis of the Pennsylvania data, except an indicator for the first baby because we focus on first--time mothers, and in addition, it includes indicators for gestational diabetes, hypertension, amniocentesis and ultra sound exams.
Again, $\mathbf{X}$ is mother's age in this application.
The specification for the estimation of the CATEF is the same as before.

The purpose of using this much larger dataset is to examine the effect of the sample size.
Our method involves nonparametric kernel regression and it might require a large sample size to yield a reliable result. 
For example, the result from the Pennsylvania data indicates that the effect of smoking is very small for very young mothers. 
One might argue that such a result is an artifact of small sample size.
The other issue is that the confidence bands obtained using the Pennsylvania data are somewhat wide.
We hope that using this larger dataset provides us with narrower confidence bands and more informative statistical results.

Figure 2 plots the estimated CATEF and the three 95\% confidence bands for the range between 
15 and 35 years of age. 
Note that the scale of the vertical axis is different from Figure 1.
We now obtain much narrower confidence bands. 
The widths of the three (uniform, point-wise and Gumbel) confidence bands 
are still different.
The estimated CATEF for young mothers is negative and statistically different from 0. 
The previous result that it is close to 0 may be considered as an artifact of small sample size.
The estimated CATEF is decreasing from around 17 to around 29 years of age. 
For the range above 30 years of age, we obtain relatively wide confidence bands. 
We reject the null hypothesis of no effect of smoking on birth weights uniformly over 15-35 years of age.
These confidence bands do not support the hypothesis that the CATEF is constant because the ATE line exceeds the confidence bands. 

One might argue that the difference in the results may stem from the fact that the North Carolina data contains richer information and we use a larger set of covariates.
We reexamine the North Carolina data based on the same set of covariates as that for the Pennsylvania data, except an indicator for the first baby.
Figure 3 plots the estimated CATEF and confidence bands obtained using this set of covariates.
The results in Figure 3 are qualitatively very similar to those in Figure 2.
We thus believe that the difference between the results from the Pennsylvania data and the North Carolina data are not from the difference in the covariates but from the difference in the sizes of these two samples.

We thus interpret our findings to indicate that the different results come from the difference in sample size yet our confidence bands reasonably quantify the uncertainty from small sample size.
While two data-sets yield different estimates of CATEF, the confidence bands from the Pennsylvania data include the estimated CATEF and the confidence bands from the North Carolina data.

While we use the same data set as that used in \cite{Abrevaya-et-al-JBES}, it is somewhat difficult to compare their results with ours because of differences in the implementations. 
In particular, the bandwidths are very different. 
Our choice of bandwidth is around 0.2, while theirs are between 1.4--11.2. 
Nonetheless, we make several remarks. 
Using small bandwidths (1.4 and 2.8), \cite{Abrevaya-et-al-JBES} observe almost no effect for young mothers and a large negative effect for 25--30 years old mothers.
We do not observe such a large difference in the effect across different age groups.
Our confidence band is as tight as their confidence band obtained with bandwidth 11.2 even though we use a much smaller bandwidth and our confidence band is uniform. 
This is possibly because we use an AIPW method which yields a more efficient estimate than an IPW method does.

\section{Conclusion}\label{sec:C}
 In this paper, we propose a doubly robust method for estimating the CATEF. 
We consider the situation where a high-dimensional vector of covariates is needed for identifying the average treatment effect but the covariates of interest are of much lower dimension. 
Our proposed estimator is doubly robust and does not suffer from the curse of dimensionality.
We propose a uniform confidence band that is easy to compute, and we illustrate its usefulness via Monte Carlo experiments and an application to the effects of smoking on birth weights. 

There are a few topics to be explored in the future. 
First, it would be useful to consider the issue of asymptotic biases of the proposed estimator without relying on undersmoothing. For example, it might be possible to extend the approach 
of \cite{hall2013} that avoids undersmoothing for our purposes.
Second, it would be an interesting exercise to develop a method for estimating the quantile treatment effects conditional on covariates. 
Third, it is possible to extend our approach to the local average treatment effect.
As mentioned in the Introduction, 
\cite{Ogburn-et-al-JRSSB} consider conditioning on $\mathbf{Z}$ to achieve identification, but they estimate the local average treatment effect, say LATE($\mathbf{X}$), as a function of $\mathbf{X}$. However, their specification of LATE($\mathbf{X}$) is parametric.
Our approach can be adapted to specify LATE($\mathbf{X}$) nonparametrically and to develop a corresponding uniform confidence band. 
Fourth, this paper does not cover marginal treatment effects that can be identified using the method of local instrumental variables developed by \citet{Heckman:Vytlacil:99,Heckman:Vytlacil:05}. It would be interesting to develop a uniform confidence band for the marginal treatment effects.

\appendix

\section{The direct plug-in bandwidth selector of \cite{Ruppert:Sheather:Wand:1995}} \label{ap-bandwidth}
In this section, we give a brief description of the direct plug-in bandwidth selector of \cite{Ruppert:Sheather:Wand:1995} for local linear regression.
We focus on the case of the Gaussian kernel and univariate regressor. 
Note that this bandwidth can be computed with the ``\textrm{dpill}'' function in the ``KernSmooth'' package for R \citep{kernsmooth}.

In the following, we denote the dependent variable by $\psi_i$ and the regressor by $X_i$. 
We consider estimating $\mathbb{E}[ \psi | X= x]$ for $x \in [a, b]$ for some $a$ and $b$. 
In our implementation, we use $a = \min_{1 \le i \le n} X_i$ and $b = \max_{1\le i \le n } X_i$.

\paragraph{Step 1:}
We divide the sample into $N$ blocks and estimate a quartic regression model for each block.
The number of blocks is chosen by minimizing the Mallows' $C_p$:
\begin{align*}
 C_p (N) = \frac{RSS(N)}{RSS(N_{\max})}(n- N_{\max}) - (n-10N),
\end{align*}
where $RSS (N)$ is the residual sum of squares based on a blocked quartic fit over $N$ blocks, and 
\begin{align*}
 N_{\max} = \max \{ \min ( \lfloor n/20 \rfloor , 5), 1\}.
\end{align*}
Let $\hat m_Q^{(2)}$ and $\hat m_Q^{(4)}$ be the estimates of the second and fourth derivative of the regression function from the blocked quartic fit.
Let 
\begin{align*}
 \hat \theta_{24}^Q = \frac{1}{n} \sum_{i=1}^n \sum_{j=1}^N \hat m_Q^{(2)} (X_i)  \hat m_Q^{(4)} (X_i) 
\mathbf{1}_{\{ X_i \in \mathcal{X}_j\}},
\end{align*}
where $\mathcal{X}_j$ is the set of $X_i$ belonging to the $j$-th block.
Let $\hat m_Q  $ be the estimated regression curve from the blocked quartic fit.
Let
\begin{align*}
 \hat \sigma_{Q}^2 
= \frac{1}{n-5N} \sum_{i=1}^n \sum_{j=1}^N (Y_i - \hat m_Q (X_i))^2 \mathbf{1}_{\{ X_i \in \mathcal{X}_j\}}.
\end{align*}

\paragraph{Step 2:}
We estimate a local cubic regression model using the following bandwidth:
\begin{align*}
 \hat g_1
= C_2 (K) \left[ \frac{\hat \sigma_Q^2 (b-a)}{|\hat \theta_{24}^Q|} n\right]^{1/7},
\end{align*}
where
\begin{align*}
 C_2 (K)
=
\begin{cases}
 \{ 3/ (8 \sqrt{\pi})\}^{1/7} & \text{if } \hat \theta_{24}^Q <0, \\
 \{ 15/ (16 \sqrt{\pi})\}^{1/7} & \text{if } \hat \theta_{24}^Q >0.
\end{cases}
\end{align*}
Let $\hat m_C^{(2)}$ be the estimate of second derivative of the regression function from the local cubic regression.
Let 
\begin{align*}
 \hat \theta_{22}
= \frac{1}{n} \sum_{i=1}^n ( \hat m_C^{(2)} (x_i) )^2
\mathbf{1}_{\{ 0.95 a + 0.05 b < X_i < 0.05 a + 0.95b   \}}.
\end{align*}

We estimate a local linear regression model using the following bandwidth:
\begin{align*}
 \hat g_2 = \left\{ 4 \left( \frac{1}{2} + 2 \sqrt{2} - \frac{4}{3} \sqrt{3} / \sqrt{2\pi}\right)\right\}^{1/9}
\left[ \frac{\hat \sigma_{Q}^4 (b-a )}{\hat \theta_{22}^2 n^2}\right]^{1/9}.
\end{align*}
Let $\hat m_L$ be the estimated regression curve from this local linear regression.
Let 
\begin{align*}
 \hat \sigma^2 
= \frac{1}{n - 2 \sum_{i=1}^n w_{ii} + \sum_{i=1}^n \sum_{j=1}^n w_{ij}^2} \sum_{i=1}^n (\psi_i- \hat m (X_i))^2,
\end{align*}
where $w_{ij}$ is the $(1, j)$-th element of $(X_{1,i}^{\top} W_i X_{1,i})^{-1} X_{1,i}^{\top} W_i$,
$X_{1,i}$ is the $n \times 2$ matrix whose first column is a vector of ones and the $j$-th element of whose second column is $X_j - X_i$, $W_i$ is the diagonal matrix whose $j$-th element is $K\{ (X_j- X_i )/\hat g_2\} / \hat g_2$ and $K$ is the kernel function.

\paragraph{Step 3:}

The bandwidth is computed as:
\begin{align*}
 \hat h =  \left( \frac{1}{2 \sqrt{\pi}} \right)^{1/5} 
\left[ \frac{ \hat \sigma^2 (b-a)}{ \hat \theta_{22} n }\right]^{1/5}.
\end{align*}

\section{Proofs}\label{sec:A}

\begin{proof}[Proof of Lemma \ref{iden-lem}]
Because $DY = DY_1$ and $Y_1$ and $D$ are independent of each other conditional on $\mathbf{Z}$, write:
\begin{align}\label{double-robust-lem-eq}
\mathbb{E}\left[ \psi _{1}(\mathbf{W},\alpha _{10},\beta_0 )|\mathbf{X}=\mathbf{x} \right]
&= \mathbb{E}\left[  \frac{\mathbb{E}\left[ D|\mathbf{Z}\right]  \mathbb{E}\left[Y_1|\mathbf{Z}\right] }{\pi (\mathbf{Z},\beta_0 )}-\frac{\mathbb{E}\left[D|\mathbf{Z}\right] -\pi
(\mathbf{Z},\beta_0 )}{\pi (\mathbf{Z},\beta_0 )}\mu _{1}(\mathbf{Z},\alpha _{10})  \bigg|\mathbf{X}=\mathbf{x} \right].
\end{align}
Suppose that $\beta_0$ satisfies $\mathbb{E}\left[D|\mathbf{Z}\right] = \pi
(\mathbf{Z},\beta_0 )$ almost surely. Then the right-hand side of \eqref{double-robust-lem-eq} reduces to: 
\begin{align*}
\mathbb{E}\left[    \mathbb{E}\left[Y_1|\mathbf{Z}\right] |\mathbf{X}=\mathbf{x} \right] = \mathbb{E}\left[    Y_1 |\mathbf{X}=\mathbf{x} \right].
\end{align*}
Suppose now that $\alpha_{10}$ satisfies
$\mathbb{E}\left[Y_1|\mathbf{Z}\right] = \mu _{1}(\mathbf{Z},\alpha _{10})$ almost surely. 
Then the right-hand side of \eqref{double-robust-lem-eq} again reduces to: 
\begin{align*}
\mathbb{E}\left[    \mu _{1}(\mathbf{Z},\alpha _{10}) |\mathbf{X}=\mathbf{x} \right] = \mathbb{E}\left[    Y_1 |\mathbf{X}=\mathbf{x} \right].
\end{align*}
Analogously, we have $\mathbb{E}\left[ \psi _{0}(\mathbf{W},\alpha _{00},\beta_0 )|\mathbf{X}=\mathbf{x} \right] = \mathbb{E}\left[    Y_0 |\mathbf{X}=\mathbf{x} \right]$.
\end{proof}

The remainder of the appendix gives the proof of Theorem \ref{main-uniform}.
We first establish the linear expansion of the local linear estimator.

\begin{lem}\label{KLX-lemma}
\begin{align*}
  \sup_{\mathbf{x} \in \mathcal{I}}  \sqrt{n h_n^d} \left| \frac{\hat{g}(\mathbf{x}) -  g(\mathbf{x})}{\hat{s}(\mathbf{x})} - \frac{1}{n h_n^d s_n(\mathbf{x})}   \sum_{i=1}^n \frac{U_i}{f_\mathbf{X}(\mathbf{x})} \mathbf{K} \left( \frac{\mathbf{X}_i - \mathbf{x}}{h_n} \right)\right| 
  = O_p \left(  n^{-c} \right)
\end{align*}
for some positive constant $c > 0$.
\end{lem}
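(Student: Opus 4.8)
The plan is to write the studentized local linear estimator as the sum of a standard local linear bias term driven by $g$, a stochastic linear term driven by the residuals $U_i$, and a first-stage remainder driven by $\hat\theta-\theta_0$, and to show that after centering at the stated leading term each of the resulting errors is $O_p(n^{-c})$, using the undersmoothing window in Assumption \ref{main-assumption}(6) together with sup-norm-over-$\mathcal{I}$ rates for kernel-weighted averages.

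Concretely, I would write $\hat\gamma(\mathbf{x})=\hat M_n(\mathbf{x})^{-1}\hat T_n(\mathbf{x})$, where $\mathbf{r}_i(\mathbf{x})\equiv(1,(\mathbf{X}_i-\mathbf{x})^\top/h_n)^\top$, $\hat M_n(\mathbf{x})\equiv(nh_n^d)^{-1}\sum_i \mathbf{r}_i(\mathbf{x})\mathbf{r}_i(\mathbf{x})^\top \mathbf{K}((\mathbf{X}_i-\mathbf{x})/h_n)$, and $\hat T_n(\mathbf{x})$ is the analogous weighted sum of $\mathbf{r}_i(\mathbf{x})\psi(\mathbf{W}_i,\hat\theta)$. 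Substituting $\psi(\mathbf{W}_i,\hat\theta)=g(\mathbf{X}_i)+U_i+[\psi(\mathbf{W}_i,\hat\theta)-\psi(\mathbf{W}_i,\theta_0)]$ splits $\hat g(\mathbf{x})-g(\mathbf{x})$ into three pieces. The $g$-piece is the usual local linear bias: by Assumption \ref{main-assumption}(4) (and the design-adaptivity of the local linear fit, using that $f_{\mathbf{X}}$ has continuous derivatives by Assumption \ref{main-assumption}(2)) it is $O(h_n^2)$ uniformly on $\mathcal{I}$, and since $\eta>1/(d+4)$ one has $\sqrt{nh_n^d}\,h_n^2=O(n^{-c})$; dividing by $\hat s(\mathbf{x})$, which is bounded away from zero with probability approaching one by Assumptions \ref{main-assumption}(7)--(8), preserves the bound. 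The first-stage piece is bounded by $\max_{1\le i\le n}|\psi(\mathbf{W}_i,\hat\theta)-\psi(\mathbf{W}_i,\theta_0)|$ times the $\ell_1$-norm of the local linear weights, which is $O_p(1)$ uniformly on $\mathcal{I}$ because $\hat M_n(\mathbf{x})$ is uniformly invertible; Assumption \ref{main-assumption}(9) then makes this term $O_p(n^{-c})$ after multiplying by $\sqrt{nh_n^d}$.

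The substantive step is the $U_i$-piece. By the symmetry of $K$ in Assumption \ref{main-assumption}(5), $\hat M_n(\mathbf{x})$ converges to the block-diagonal matrix $M(\mathbf{x})=f_{\mathbf{X}}(\mathbf{x})\,\mathrm{diag}(1,\mu_2(K)I_d)$ with $\mu_2(K)=\int u^2K(u)\,du$, so $\mathbf{e}_1^\top M(\mathbf{x})^{-1}$ extracts $f_{\mathbf{X}}(\mathbf{x})^{-1}$ times the leading coordinate of the numerator vector, i.e. exactly $\hat V_n(\mathbf{x})\equiv(nh_n^d)^{-1}\sum_i f_{\mathbf{X}}(\mathbf{x})^{-1}U_i\mathbf{K}((\mathbf{X}_i-\mathbf{x})/h_n)$. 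Hence the exact stochastic term minus $\hat V_n(\mathbf{x})$ equals $[\mathbf{e}_1^\top\hat M_n(\mathbf{x})^{-1}-\mathbf{e}_1^\top M(\mathbf{x})^{-1}]$ times $(nh_n^d)^{-1}\sum_i\mathbf{r}_i(\mathbf{x})U_i\mathbf{K}((\mathbf{X}_i-\mathbf{x})/h_n)$. I would bound the first factor via $\sup_{\mathbf{x}\in\mathcal{I}}\|\hat M_n(\mathbf{x})-M(\mathbf{x})\|=O_p((nh_n^d)^{-1/2}\sqrt{\log n}+h_n)$ plus uniform invertibility (using $f_{\mathbf{X}}$ bounded below, Assumption \ref{main-assumption}(2)), and the second factor—conditionally mean zero given $\{\mathbf{X}_i\}$ with $\sup_{\mathbf{x}}\mathbb{E}[U^4|\mathbf{X}=\mathbf{x}]<\infty$ by Assumption \ref{main-assumption}(3)—by $O_p((nh_n^d)^{-1/2}\sqrt{\log n})$, both sup-norm rates coming from the maximal inequalities for kernel-type empirical processes in \cite{Ghosal/Sen/vanderVaart:00} applied to the VC-type classes generated by the fixed smooth kernel times monomials in $(\mathbf{X}-\mathbf{x})/h_n$. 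Multiplying by $\sqrt{nh_n^d}$ gives $O_p(\log n/\sqrt{nh_n^d}+h_n\sqrt{\log n})=O_p(n^{-c})$, since $\eta<1/(2d)$ forces $nh_n^d\gg n^{1/2}$ and $h_n=Cn^{-\eta}$. Finally, replacing $\hat s(\mathbf{x})$ by $s_n(\mathbf{x})$ in the denominator costs $\sqrt{nh_n^d}\,\hat V_n(\mathbf{x})=O_p(\sqrt{\log n})$ (same maximal inequality) times $\sup_{\mathbf{x}\in\mathcal{I}}|\hat s(\mathbf{x})^{-1}-s_n(\mathbf{x})^{-1}|=O_p(n^{-c})$, which follows from Assumptions \ref{main-assumption}(7)--(8). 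Taking $c$ to be the smallest of the exponents produced above yields the lemma.

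The main obstacle is making the sup-over-$\mathcal{I}$ rates for $\hat M_n$ and for the kernel-weighted sum of the $U_i$ genuinely polynomial in $n$: one has to set up the empirical-process arguments of \cite{Ghosal/Sen/vanderVaart:00} on the right VC-type function classes and then verify that the undersmoothing range $\eta\in(1/(d+4),1/(2d))$ absorbs every residual logarithmic factor into a strictly positive power of $n^{-1}$. Everything else is bookkeeping through standard local linear algebra and the bounds already supplied by Assumption \ref{main-assumption}.
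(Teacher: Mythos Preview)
Your proposal is correct and follows essentially the same route as the paper's own proof: both decompose $\hat g-g$ into the local-linear bias piece, the stochastic $U_i$-piece, and the first-stage remainder, handle the bias via the $O(h_n^2)$ Ruppert--Wand result together with undersmoothing, dispatch the first-stage piece via Assumption~\ref{main-assumption}(9), and replace $\hat s$ by $s_n$ using Assumptions~\ref{main-assumption}(7)--(8). The only cosmetic difference is that you control the $U_i$-piece through a matrix-perturbation bound $\mathbf{e}_1^\top(\hat M_n^{-1}-M^{-1})$, whereas the paper writes out the explicit block-inverse expansion \`a la \cite{Ruppert/Wand:94} and isolates the off-diagonal contribution as a separate term $T_{n12}$; the resulting uniform rates are the same.
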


\begin{proof}[Proof of Lemma \ref{KLX-lemma}]
Let $\hat{\bm{\Psi}}$ and $\bm{\Psi_0}$ denote the $n$-dimensional vectors such that
$\hat{\bm{\Psi}} = \{\psi (\mathbf{W}_{i},\hat \theta) \}_{i=1}^n$ and 
$\bm{\Psi_0} = \{\psi (\mathbf{W}_{i},\theta_0) \}_{i=1}^n$, respectively.
Let $\bm{\Gamma}(\bm{x})$ be the $n \times (d+1)$ matrix whose $i$-th row is $[1, (\mathbf{X}_i - \mathbf{x})^\top]$,
$\bm{\Omega}(\bm{x})$  the $n$-dimensional diagonal matrix whose $i$-th element is 
$h_n^{-1} \mathbf{K} \left[ (\mathbf{X}_i - \mathbf{x})/{h_n} \right]$,
$\bm{G} := [g(\bm{X}_i)]_{i=1}^n$ the  $n$-dimensional vector of regression functions evaluated at data points,
and 
$\bm{U} := (U_i)_{i=1}^n$ the $n$-dimensional vector of regression errors.
Let $\mathbf{e}_1$ denote the $(d+1) \times 1$ vector whose  first element is one and all others are zeros.
Write
\begin{align*}
\hat{g}(\bm{x}) - {g}(\bm{x}) &= T_{n1}(\bm{x}) + T_{n2}(\bm{x}) + R_{n1} (\bm{x}), 
\end{align*}
where 
\begin{align*}
T_{n1}(\bm{x}) &= \mathbf{e}_1^\top \left [\bm{\Gamma}(\bm{x})^\top \bm{\Omega}(\bm{x}) \bm{\Gamma}(\bm{x}) \right]^{-1} 
\bm{\Gamma}(\bm{x})^\top \bm{\Omega}(\bm{x}) \bm{U}, \\
T_{n2}(\bm{x}) &= \mathbf{e}_1^\top \left [\bm{\Gamma}(\bm{x})^\top \bm{\Omega}(\bm{x}) \bm{\Gamma}(\bm{x}) \right]^{-1} 
\bm{\Gamma}(\bm{x})^\top \bm{\Omega}(\bm{x}) \bm{G}, \\
R_{n} (\bm{x}) &= 
\mathbf{e}_1^\top \left [\bm{\Gamma}(\bm{x})^\top \bm{\Omega}(\bm{x}) \bm{\Gamma}(\bm{x}) \right]^{-1} 
\bm{\Gamma}(\bm{x})^\top \bm{\Omega}(\bm{x}) \left( \hat{\bm{\Psi}} - \bm{\Psi_0} \right).
\end{align*}%

We first consider the leading stochastic term $T_{n1}(\bm{x})$.
As in equation (2.10) of \cite{Ruppert/Wand:94}, we have that 
\begin{align}\label{before-inverse-expression}
\begin{split}
&n^{-1} \bm{\Gamma}(\bm{x})^\top \bm{\Omega}(\bm{x}) \bm{\Gamma}(\bm{x}) \\
&= \left(
\begin{array}{cc}
\frac{1}{n h_n^d} \sum_{i=1}^n  \mathbf{K} \left( \frac{\mathbf{X}_i - \mathbf{x}}{h_n} \right) & 
\frac{1}{n h_n^d} \sum_{i=1}^n  \mathbf{K} \left( \frac{\mathbf{X}_i - \mathbf{x}}{h_n} \right) (\mathbf{X}_i - \mathbf{x})^\top \\
\frac{1}{n h_n^d} \sum_{i=1}^n  \mathbf{K} \left( \frac{\mathbf{X}_i - \mathbf{x}}{h_n} \right)  
(\mathbf{X}_i - \mathbf{x})&\
\frac{1}{n h_n^d} \sum_{i=1}^n  \mathbf{K} \left( \frac{\mathbf{X}_i - \mathbf{x}}{h_n} \right) 
(\mathbf{X}_i - \mathbf{x}) (\mathbf{X}_i - \mathbf{x})^\top
\end{array}
\right).
\end{split}
\end{align}
By the standard empirical process results 
(see e.g., \cite{Pollard:84} or \cite{VanDerVaart/Wellner:96})
combined with the usual change of variables, we have that uniformly in $\mathbf{x} \in \mathcal{I}$,
 \begin{align*}
\frac{1}{n h_n^d} \sum_{i=1}^n  \mathbf{K} \left( \frac{\mathbf{X}_i - \mathbf{x}}{h_n} \right) 
& = f_{\mathbf{X}}(\mathbf{x}) + o(h_n) +  O_p \left[  \left(\frac{\log n}{n h_n^d}\right)^{1/2} \right], 
\\ 
\frac{1}{n h_n^d} \sum_{i=1}^n  \mathbf{K} \left( \frac{\mathbf{X}_i - \mathbf{x}}{h_n} \right)  
(\mathbf{X}_i - \mathbf{x}) 
& =  h_n^2 \frac{\partial f_{\mathbf{X}} (\mathbf{x})}{\partial \mathbf{x} } \mu_2(K) 
+ o(h_n^2) + O_p \left[  h_n \left(\frac{\log n}{n h_n^d}\right)^{1/2} \right], \\
\frac{1}{n h_n^d} \sum_{i=1}^n  \mathbf{K} \left( \frac{\mathbf{X}_i - \mathbf{x}}{h_n} \right) 
(\mathbf{X}_i - \mathbf{x}) (\mathbf{X}_i - \mathbf{x})^\top
& =  h_n^2  f_{\mathbf{X}} (\mathbf{x})\mu_2(K) 
+ o(h_n^2) + O_p \left[  h_n^2 \left(\frac{\log n}{n h_n^d}\right)^{1/2} \right], 
\end{align*}
where $\mu_2(K) := \int u^2 K(u) du$.

Throughout the remainder of the proof, we let $r_n(\mathbf{x}) = O_p( n^{-c})$ , uniformly in $\bm{x}$, be a sequence that can be 
different in different places for some constant $c > 0$.  
Then as in (2.11) of \cite{Ruppert/Wand:94}, we have that
\begin{align}\label{inverse-expression}
\begin{split}
&\left [ n^{-1} \bm{\Gamma}(\bm{x})^\top \bm{\Omega}(\bm{x}) \bm{\Gamma}(\bm{x}) \right]^{-1} \\
&= \left(
\begin{array}{cc}
f_{\mathbf{X}}(\mathbf{x})^{-1}[1 + r_n(\mathbf{x})]  &  -  [{\partial f_{\mathbf{X}} (\mathbf{x})}/{\partial \mathbf{x} }]^\top  f_{\mathbf{X}}(\mathbf{x})^{-2}[1 + r_n(\mathbf{x})]  \\
-  [{\partial f_{\mathbf{X}} (\mathbf{x})}/{\partial \mathbf{x} }]  f_{\mathbf{X}}(\mathbf{x})^{-2}[1 + r_n(\mathbf{x})]  &     \left[ \mu_2(K)  f_{\mathbf{X}}(\mathbf{x}) h_n^2 \mathbf{I}_{d} \right]^{-1}[1 + r_n(\mathbf{x})]
\end{array}
\right),
\end{split}
\end{align}
where $\mathbf{I}_{d}$ is the $d$-dimensional identity matrix.
The little $o_p(\cdot)$ terms in equation (2.11) of \cite{Ruppert/Wand:94} are pointwise;
however, \eqref{inverse-expression} holds  uniformly in $\mathbf{x} \in \mathcal{I}$
with polynomially decaying terms $r_n(\mathbf{x})$  under our assumptions.

Let $\bm{\Gamma}_i(\bm{x}) := [1, (\mathbf{X}_i - \mathbf{x})^\top ]^\top$.
Since
\begin{align*}
n^{-1} \bm{\Gamma}(\bm{x})^\top \bm{\Omega}(\bm{x}) \bm{U} 
&= \frac{1}{n h_n^d} \sum_{i=1}^n U_i 
 \mathbf{K} \left( \frac{\mathbf{X}_i - \mathbf{x}}{h_n} \right) \bm{\Gamma}_i(\bm{x}),
 \end{align*}
we have by \eqref{inverse-expression}  that $T_{n1}(\bm{x}) = T_{n11}(\bm{x})  + T_{n12}(\bm{x})$, where
 \begin{align*}
T_{n11}(\bm{x}) &= \frac{1}{n h_n^d f_{\mathbf{X}}(\mathbf{x})} \sum_{i=1}^n U_i 
 \mathbf{K} \left( \frac{\mathbf{X}_i - \mathbf{x}}{h_n} \right) [1 + r_n(\mathbf{x})], \\
T_{n12}(\bm{x})  &=
-\frac{1}{n h_n^d [f_{\mathbf{X}}(\mathbf{x})]^2} \sum_{i=1}^n U_i 
 \mathbf{K} \left( \frac{\mathbf{X}_i - \mathbf{x}}{h_n} \right) 
 \left[ \frac{\partial f_{\mathbf{X}} (\mathbf{x})}{\partial \mathbf{x} } \right]^\top (\mathbf{X}_i - \mathbf{x}) [1 + r_n(\mathbf{x})].
\end{align*}
Again using the standard empirical process result and the method of change of variables, 
\begin{align*}
T_{n11}(\bm{x}) = O_p \left[  \left( \frac{\log n}{n h_n^d} \right)^{1/2}  \right]
\; \text{ and } \;
T_{n12}(\bm{x}) = O_p \left[  h_n \left( \frac{\log n}{n h_n^d} \right)^{1/2}  \right]
\end{align*}
uniformly in $\mathbf{x} \in \mathcal{I}$.
Therefore, we have shown that
\begin{align}\label{bahadur-lemma}
T_{n1}(\bm{x}) &= \frac{1}{n h_n^d f_{\mathbf{X}}(\mathbf{x})} \sum_{i=1}^n U_i 
 \mathbf{K} \left( \frac{\mathbf{X}_i - \mathbf{x}}{h_n} \right)[1 + r_n(\mathbf{x})].
\end{align}

We now move on the other remainder terms.
The proof of Theorem 2.1 (in particular, equation (2.3)) of \cite{Ruppert/Wand:94} implies that 
$T_{n2}(\bm{x}) = O ( h_n^{2} )$ uniformly in $\mathbf{x} \in \mathcal{I}$.
The condition that $n h_n^{d + 4} \rightarrow 0$ at a polynomial rate in $n$ corresponds to the undersmoothing condition.
It is straightforward to show that
$(n h_n^d)^{1/2} R_n(\bm{x}) = O ( n^{-c} )$ uniformly in $\mathbf{x}$ for some constant $c > 0$ due to Assumption \ref{main-assumption}(9) that 
$$\max\left\{  (n h_n^d)^{1/2}|\psi (\mathbf{W}_{i},\hat \theta) - \psi (\mathbf{W}_{i},\theta_0) | i=1,\dots,n \right\} = O_p( n^{-c})$$ for some constant $c > 0$ 

Note that by conditions (7) and (8) of Assumption \ref{main-assumption}, we have that 
$\inf_{n \geq 1} \inf_{\mathbf{x} \in \mathcal{I}} s_n(\mathbf{x}) > 0$ and 
$\sup_{\mathbf{x} \in \mathcal{I}} \left| \hat s^2 (\mathbf{x}) - s_n^2(\mathbf{x}) \right| = O_p( n^{-c})$. Hence, the lemma follows from \eqref{bahadur-lemma} immediately.
\end{proof}

Define: 
\begin{align*}
T_n(\mathbf{x}) &\equiv \frac{1}{n h_n^d} \sum_{i=1}^n U_i \mathbf{K} \left( \frac{\mathbf{X}_i - \mathbf{x}}{h_n} \right) \ \ \text{and} \ \
c_n(\mathbf{x}) 
\equiv \left\{ \frac{1}{h_n^{d}} \mathbb{E}\left[ U^2 \mathbf{K}^2 \left( \frac{\mathbf{X} - \mathbf{x}}{h_n} \right) \right] \right\}^{-1/2}.
\end{align*}
Note that $c_n(\mathbf{x}) = [f_\mathbf{X}(\mathbf{x}) s_n(\mathbf{x})]^{-1}$.
By  Lemma \ref{KLX-lemma}: 
\begin{align*}
  \max_{\mathbf{x} \in \mathcal{I}}   \sqrt{n h_n^d} \left| \frac{\hat{g}(\mathbf{x}) -  g(\mathbf{x})}{\hat{s}(\mathbf{x})} - c_n(\mathbf{x})  T_n(\mathbf{x}) \right| = O_p \left(  n^{-c} \right).
\end{align*}

We now use the result of \cite{CCK:2014} to obtain Gaussian approximations.
Define:
\begin{align}\label{sup-stat}
W_n  \equiv \sup_{\mathbf{x} \in \mathcal{I}} c_n(\mathbf{x}) \sqrt{n h_n^d} \left[ T_n(\mathbf{x}) - \mathbb{E} T_n(\mathbf{x})\right].
\end{align}
\cite{CCK:2014} established an approximation of
$W_n$ by a sequence of suprema of Gaussian processes. For each $n \geq 1$, let 
$\tilde{B}_{n,1}$ be a centered Gaussian process indexed by $\mathcal{I}$ with covariance function:
\begin{align}\label{cov-kl}
\begin{split}
& \mathbb{E}[\tilde{B}_{n,1}(\mathbf{x}) \tilde{B}_{n,1}(\mathbf{x}')]= \\
& h_n^{-d} c_n(\mathbf{x}) c_n(\mathbf{x}') 
\text{Cov} \left[ U^2 \mathbf{K} \left( \frac{\mathbf{X} - \mathbf{x}}{h_n} \right) \mathbf{K} \left( \frac{\mathbf{X} - \mathbf{x}'}{h_n} \right) \right].
\end{split}
\end{align}
Proposition 3.2 of \cite{CCK:2014} establishes the following approximation result.

\begin{lem}
\label{cck-thm}
 Let Assumption \ref{main-assumption} hold. Then for every $n \geq 1$, there is a tight Gaussian random variable $\tilde{B}_{n,1}$ in $\ell^\infty(\mathcal{I})$ with mean zero and covariance function \eqref{cov-kl},
and there is a sequence $\tilde{W}_{n,1}$ of random variables such that 
$\tilde{W}_{n,1} =_d \sup_{\mathbf{x} \in \mathcal{I}} \tilde{B}_{n,1}(\mathbf{x})$ and as $n \rightarrow \infty$: 
\begin{align*}
|W_n - \tilde{W}_{n,1}| = O_\mathbb{P} 
\left\{
(n h_n^d)^{-1/6} \log n + (n h_n^d)^{-1/4} \log^{5/4} n 
+ (n^{1/2} h_n^d)^{-1/2} \log^{3/2} n 
\right\}.
\end{align*}
\end{lem}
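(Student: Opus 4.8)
The plan is to recast $W_n$ as the supremum of an empirical process over a normalized kernel class and then invoke Proposition~3.2 of \cite{CCK:2014}, after verifying its hypotheses under Assumption~\ref{main-assumption}. Set
\begin{align*}
f_{n,\mathbf{x}}(u,\mathbf{v}) \equiv c_n(\mathbf{x})\, h_n^{-d/2}\, u\, \mathbf{K}\!\left( \frac{\mathbf{v} - \mathbf{x}}{h_n} \right), \qquad \mathbf{x} \in \mathcal{I},
\end{align*}
and $\mathcal{F}_n \equiv \{ f_{n,\mathbf{x}} : \mathbf{x} \in \mathcal{I} \}$. Since $(U_i,\mathbf{X}_i)_{i=1}^n$ are i.i.d., one checks directly that $W_n = \sup_{f \in \mathcal{F}_n} \mathbb{G}_n f$ with $\mathbb{G}_n f \equiv n^{-1/2}\sum_{i=1}^n \{ f(U_i,\mathbf{X}_i) - \mathbb{E}\, f(U,\mathbf{X}) \}$. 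It then suffices to check that (i) $\mathcal{F}_n$ is a pointwise measurable VC-type class with an envelope $F_n$ whose $L^q$-norms have the right order in $h_n$, (ii) the process is uniformly non-degenerate, and then to read off the coupling rate from their statement; the covariance of the Gaussian limit is identified automatically as the covariance of $f_{n,\mathbf{x}}(U,\mathbf{X})$, i.e.\ \eqref{cov-kl}.

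First I would establish the VC-type (uniform entropy) property. By Assumption~\ref{main-assumption}(5), $\mathbf{K}$ is a product of univariate kernels, each supported on $[-1,1]$ and six times differentiable, hence each of bounded variation; consequently the translation class $\{ \mathbf{v} \mapsto \mathbf{K}((\mathbf{v}-\mathbf{x})/h_n) : \mathbf{x} \in \mathcal{I} \}$ has a uniform covering number bounded by a power of $1/\eps$ with constants independent of $h_n$, by standard permanence results (e.g.\ \cite{VanDerVaart/Wellner:96}, as used in the proofs of \cite{CCK:2014}). Multiplying by the coordinate $u$ and by $c_n(\mathbf{x})$---which satisfies $0 < \inf_{n,\mathbf{x}\in\mathcal{I}} c_n(\mathbf{x}) \le \sup_{n,\mathbf{x}\in\mathcal{I}} c_n(\mathbf{x}) < \infty$ because $\inf_n\inf_{\mathbf{x}\in\mathcal{I}} s_n(\mathbf{x}) > 0$ (Assumption~\ref{main-assumption}(7)), $\sup_{\mathbf{x}} f_\mathbf{X}(\mathbf{x}) < \infty$ (Assumption~\ref{main-assumption}(2)), and $\sup|\mathbf{K}| < \infty$---preserves the VC-type property and yields the envelope $F_n(u,\mathbf{v}) = C\, h_n^{-d/2}\, \abs{u}\, \mathbf{1}\{ \mathbf{v} \in \mathcal{I}^{h_n} \}$ for a constant $C$ not depending on $n$, where $\mathcal{I}^{h_n}$ is the $h_n$-enlargement of $\mathcal{I}$.

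Next I would record the moments. Because $c_n(\mathbf{x})^{-2} = h_n^{-d}\,\mathbb{E}[ U^2 \mathbf{K}^2((\mathbf{X}-\mathbf{x})/h_n)]$ and $\mathbb{E}[U\mid\mathbf{X}]=0$ (by Lemma~\ref{iden-lem}), we obtain $\mathbb{E}[f_{n,\mathbf{x}}^2] = 1$ for every $\mathbf{x}\in\mathcal{I}$, so the class is uniformly non-degenerate with weak variance equal to one; and by Assumption~\ref{main-assumption}(3) ($\sup_\mathbf{x}\mathbb{E}[U^4\mid\mathbf{X}=\mathbf{x}] < \infty$) together with the bounded density (Assumption~\ref{main-assumption}(2)), a change of variables gives $\mathbb{E}[F_n^q] = O( h_n^{-dq/2})$ for $q=2,3,4$. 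With these ingredients the hypotheses of Proposition~3.2 of \cite{CCK:2014} hold, and that proposition supplies a tight centered Gaussian element $\tilde{B}_{n,1}$ of $\ell^\infty(\mathcal{I})$ with covariance \eqref{cov-kl} and a variable $\tilde{W}_{n,1} =_d \sup_{\mathbf{x}\in\mathcal{I}} \tilde{B}_{n,1}(\mathbf{x})$ coupled to $W_n$ so that $\abs{W_n - \tilde{W}_{n,1}}$ is bounded by the coupling error in their statement. Substituting the weak variance $1$ and the envelope orders $\mathbb{E}[F_n^q] = O(h_n^{-dq/2})$ (in particular $\norm{F_n}_{\mathbb{P},4} = O(h_n^{-d/2})$, which generates the term with $n^{-1/4}$) into that error bound collapses it to the three stated terms $(n h_n^d)^{-1/6}\log n$, $(n h_n^d)^{-1/4}\log^{5/4} n$ and $(n^{1/2} h_n^d)^{-1/2}\log^{3/2} n$, all $o_p(1)$ since Assumption~\ref{main-assumption}(6) ($\eta < 1/(2d)$) forces $n h_n^d \to\infty$ and $n^{1/2} h_n^d \to\infty$.

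I expect the only real work to be bookkeeping: matching the normalization of $\mathcal{F}_n$ to the precise formulation of \cite{CCK:2014} so that the uniform-entropy exponent and the variance factor are genuinely $n$-free, and then carrying the $h_n$-dependence of the envelope moments through their coupling bound so that the three error terms emerge with exactly the displayed powers of $n h_n^d$ and of $\log n$. The identification of the covariance and the non-degeneracy are immediate from the definition of $c_n$, and the pointwise measurability of $\mathcal{F}_n$ is routine.
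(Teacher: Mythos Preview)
Your proposal is correct and follows essentially the same approach as the paper: both apply Proposition~3.2 of \cite{CCK:2014} after verifying its hypotheses under Assumption~\ref{main-assumption}. The paper's proof is simply terser---it directly lists the kernel-specific conditions of that proposition (bounded fourth conditional moment of $U$, VC-type kernel class, bounded density of $\mathbf{X}$, $\log(1/h_n)=O(\log n)$, and boundedness/continuity of $c_n$) and invokes the result with $\mathcal{G}=\{U\}$ and $q=4$, whereas you spell out the function class $\mathcal{F}_n$, its envelope, and the moment computations more explicitly; the substance is the same.
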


\begin{proof}[Proof of Lemma \ref{cck-thm}]
To apply Proposition 3.2 of \cite{CCK:2014}, we first note that Assumption \ref{main-assumption} implies that all the regularity conditions for Proposition 3.2 of \cite{CCK:2014} are satisfied. They are:
\begin{enumerate}
\item $\sup_{\mathbf{x} \in \mathbb{R}^d} \mathbb{E}[U^4|\mathbf{X}=\mathbf{x}] < \infty$.
\item $\mathbf{K}(\cdot)$ is a bounded and continuous kernel function on $\mathbb{R}^d$,
and such that the class of functions 
$\mathcal{\mathbf{K}}  \equiv \{ \mathbf{t} \mapsto \mathbf{K}(h\mathbf{t} + \mathbf{x}): h > 0, \mathbf{x} \in \mathbb{R}^d \}$ is a VC type
with envelope $\norm{ \mathbf{K} }_\infty$.
\item The distribution of $\mathbf{X}$ has a bounded Lebesgue density $p(\cdot)$ on $\mathbb{R}^d$.
\item $h_n \rightarrow 0$ and $\log (1/h_n) = O(\log n)$ as $n \rightarrow \infty$.
\item $C_{\mathcal{I}} \equiv \sup_{n \geq 1} \sup_{\mathbf{x} \in \mathcal{I}} |c_n(\mathbf{x})| < \infty$.
Moreover, for every fixed $n \geq 1$ and for every $\mathbf{x}_m \in \mathcal{I} \rightarrow \mathbf{x}
\in \mathcal{I}$ pointwise, $c_n(\mathbf{x}_m) \rightarrow c_n(\mathbf{x})$.
\end{enumerate}
Then the desired result is an immediate consequence of Proposition 3.2 of 
\cite{CCK:2014} with a singleton set $\mathcal{G} = \{ U \}$ and with $q = 4$ (using their notation) in verifying condition
(B1)' of  \cite{CCK:2014}.
\end{proof}

We now show that the Gaussian field obtained in Lemma \ref{cck-thm} can be further approximated by a stationary Gaussian field.

\begin{lem}\label{step2-thm}
Let Assumption \ref{main-assumption} hold. Then for every $n \geq 1$, there is a tight Gaussian random variable $\tilde{B}_{n,2}$ in $\ell^\infty(\mathcal{I}_n)$ with mean zero and covariance function: 
\begin{align*}
 \mathbb{E}[\tilde{B}_{n,2}(\mathbf{s}) \tilde{B}_{n,2}(\mathbf{s}')]= \rho_d(\mathbf{s}-\mathbf{s}')
\end{align*}
for $\mathbf{s},\mathbf{s}' \in \mathcal{I}_n  \equiv h_n^{-1} \mathcal{I}$, and there is a sequence of 
random variables such that 
$\tilde{W}_{n,2} =_d \sup_{\mathbf{x} \in \mathcal{I}} \tilde{B}_{n,2}(h_n^{-1} \mathbf{x})$ and as $n \rightarrow \infty$: 
\begin{align*}
|\tilde{W}_{n,1} - \tilde{W}_{n,2}| = O_\mathbb{P} 
\left(
h_n \sqrt{\log h_n^{-d}} 
\right).
\end{align*}
\end{lem}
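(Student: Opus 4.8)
Proof proposal. The plan is to realize the two Gaussian fields as stochastic integrals against one and the same Gaussian white noise, so that their difference is itself a centered Gaussian process whose variance is $O(h_n^2)$; a chaining bound then delivers the claimed rate. Write $\sigma^2(\mathbf{z}) \equiv \mathbb{E}[U^2|\mathbf{X}=\mathbf{z}]$, and recall that $g(\mathbf{X}) = \mathbb{E}[\psi(\mathbf{W},\theta_0)|\mathbf{X}]$ by Lemma \ref{iden-lem}, so $\mathbb{E}[U|\mathbf{X}]=0$ and the covariance \eqref{cov-kl} equals $h_n^{-d} c_n(\mathbf{x}) c_n(\mathbf{x}') \int \mathbf{K}((\mathbf{z}-\mathbf{x})/h_n)\mathbf{K}((\mathbf{z}-\mathbf{x}')/h_n)\sigma^2(\mathbf{z})f_{\mathbf{X}}(\mathbf{z})\,d\mathbf{z}$, with $c_n(\mathbf{x})^2 = \{\int \mathbf{K}^2(\mathbf{u})\sigma^2(\mathbf{x}+h_n\mathbf{u})f_{\mathbf{X}}(\mathbf{x}+h_n\mathbf{u})\,d\mathbf{u}\}^{-1}$. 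Let $W$ be a standard Gaussian white noise on $\mathbb{R}^d$ (so $\int \varphi\,dW \sim N(0,\int\varphi^2)$); by the usual transfer principle, enlarging the probability space of Lemma \ref{cck-thm} if necessary, I may take $\tilde{B}_{n,1}(\mathbf{x}) \equiv c_n(\mathbf{x}) h_n^{-d/2}\int \mathbf{K}((\mathbf{z}-\mathbf{x})/h_n)\sqrt{\sigma^2(\mathbf{z})f_{\mathbf{X}}(\mathbf{z})}\,dW(\mathbf{z})$ while keeping the conclusion $|W_n - \tilde{W}_{n,1}| = O_{\mathbb{P}}(\cdot)$ of that lemma in force — a direct computation shows this version of $\tilde{B}_{n,1}$ has covariance \eqref{cov-kl}. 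Then set $\tilde{B}_{n,2}(h_n^{-1}\mathbf{x}) \equiv (\int \mathbf{K}^2(\mathbf{u})\,d\mathbf{u})^{-1/2} h_n^{-d/2}\int \mathbf{K}((\mathbf{z}-\mathbf{x})/h_n)\,dW(\mathbf{z})$; the product-kernel factorization in \eqref{kl-rho} shows this is stationary with covariance $\rho_d((\mathbf{x}-\mathbf{x}')/h_n)$, and it has continuous sample paths because $\mathbf{K}$ is differentiable. Put $\tilde{W}_{n,2} \equiv \sup_{\mathbf{x}\in\mathcal{I}} \tilde{B}_{n,2}(h_n^{-1}\mathbf{x})$.

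The next step is to bound the difference process $D_n$, defined together with an auxiliary function $\Delta_n$ by
\begin{align*}
D_n(\mathbf{x}) &\equiv \tilde{B}_{n,1}(\mathbf{x}) - \tilde{B}_{n,2}(h_n^{-1}\mathbf{x}) = h_n^{-d/2}\int \mathbf{K}\big((\mathbf{z}-\mathbf{x})/h_n\big)\,\Delta_n(\mathbf{x},\mathbf{z})\,dW(\mathbf{z}), \\
\Delta_n(\mathbf{x},\mathbf{z}) &\equiv c_n(\mathbf{x})\sqrt{\sigma^2(\mathbf{z})f_{\mathbf{X}}(\mathbf{z})} - \Big(\int \mathbf{K}^2(\mathbf{u})\,d\mathbf{u}\Big)^{-1/2}.
\end{align*}
By Assumption \ref{main-assumption}(2) and (7), $\mathbf{x}\mapsto \sigma^2(\mathbf{x})f_{\mathbf{X}}(\mathbf{x})$ is Lipschitz and, using also $\inf_{n}\inf_{\mathbf{x}\in\mathcal{I}} s_n(\mathbf{x})>0$, bounded away from zero on a fixed neighborhood of $\mathcal{I}$ contained in the support (valid for all large $n$, since $\mathcal{I}$ is a strict subset of the support). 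As $\mathbf{K}$ is supported in $[-1,1]^d$, this gives $\sigma^2(\mathbf{x}+h_n\mathbf{u})f_{\mathbf{X}}(\mathbf{x}+h_n\mathbf{u}) = \sigma^2(\mathbf{x})f_{\mathbf{X}}(\mathbf{x}) + O(h_n)$ uniformly over $\mathbf{x}\in\mathcal{I}$ and $\mathbf{u}$ in the support of $\mathbf{K}$, hence $c_n(\mathbf{x})\sqrt{\sigma^2(\mathbf{z})f_{\mathbf{X}}(\mathbf{z})} = (\int \mathbf{K}^2(\mathbf{u})\,d\mathbf{u})^{-1/2} + O(h_n)$ uniformly over $\mathbf{x}\in\mathcal{I}$, $|\mathbf{z}-\mathbf{x}|\le h_n$. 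Changing variables $\mathbf{z}=\mathbf{x}+h_n\mathbf{u}$ then yields $\sup_{\mathbf{x}\in\mathcal{I}} \mathbb{E}[D_n(\mathbf{x})^2] = \sup_{\mathbf{x}\in\mathcal{I}}\int \mathbf{K}^2(\mathbf{u})\,\Delta_n(\mathbf{x},\mathbf{x}+h_n\mathbf{u})^2\,d\mathbf{u} = O(h_n^2)$; and, using in addition the Lipschitz continuity of $\mathbf{K}$ and of $\mathbf{x}\mapsto c_n(\mathbf{x})$, one gets $\mathbb{E}[(D_n(\mathbf{x})-D_n(\mathbf{x}'))^2]^{1/2} \le C\min\{|\mathbf{x}-\mathbf{x}'|,\,h_n\}$. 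Consequently $\mathcal{I}$ has diameter $O(h_n)$ in the intrinsic semimetric $d_n$ of $D_n$, with covering numbers $\log N(\epsilon,\mathcal{I},d_n) \le Cd\log(1/\epsilon)$.

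Finally, Dudley's entropy bound gives
\begin{align*}
\mathbb{E}\sup_{\mathbf{x}\in\mathcal{I}} |D_n(\mathbf{x})| &\le \mathbb{E}|D_n(\mathbf{x}_0)| + C\int_0^{C h_n} \sqrt{d\log(1/\epsilon)}\,d\epsilon \\
&= O(h_n) + O\big(h_n\sqrt{\log(1/h_n)}\big) = O\big(h_n\sqrt{\log h_n^{-d}}\big),
\end{align*}
so by Markov's inequality $\sup_{\mathbf{x}\in\mathcal{I}} |D_n(\mathbf{x})| = O_{\mathbb{P}}(h_n\sqrt{\log h_n^{-d}})$ (the Borell--TIS inequality shows moreover that this supremum concentrates around its mean at scale $\sup_{\mathbf{x}\in\mathcal{I}}(\mathbb{E}D_n(\mathbf{x})^2)^{1/2}=O(h_n)$). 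Since $|\tilde{W}_{n,1} - \tilde{W}_{n,2}| \le \sup_{\mathbf{x}\in\mathcal{I}} |\tilde{B}_{n,1}(\mathbf{x}) - \tilde{B}_{n,2}(h_n^{-1}\mathbf{x})| = \sup_{\mathbf{x}\in\mathcal{I}} |D_n(\mathbf{x})|$, the lemma follows. The key point — and the main obstacle — is precisely the coupling through a common white noise: it is this that forces the pointwise $L^2$-distance of the two fields down to $O(h_n)$, whereas a generic Gaussian comparison inequality would only control it at order $\sqrt{h_n}$ (the covariances differ by $O(h_n)$, over $\asymp h_n^{-d}$ effective points) and hence give the weaker rate $\sqrt{h_n\log h_n^{-d}}$. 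The remaining care goes into the uniform-in-$\mathbf{x}$ expansion of $\Delta_n$ — in particular, into checking that the $d_n$-diameter of $\mathcal{I}$ is genuinely $O(h_n)$, which is what makes the entropy integral truncate at $h_n$ and produce the extra factor $h_n$ rather than $\sqrt{h_n}$ — together with the measure-theoretic bookkeeping needed to keep the white-noise construction compatible with $\tilde{W}_{n,1}$ from Lemma \ref{cck-thm}.
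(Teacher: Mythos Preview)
Your proposal is correct and takes essentially the same approach as the paper: both couple $\tilde{B}_{n,1}$ and $\tilde{B}_{n,2}$ through a common underlying Gaussian object so that the difference is itself Gaussian with pointwise $L^2$-norm $O(h_n)$, then finish with a chaining/entropy bound. The only cosmetic difference is the choice of that common object --- the paper, following Lemma~3.4 of Ghosal, Sen and van der Vaart (2000), writes $\tilde{B}_{n,1}=B_n(\phi_{n,\mathbf{x}})$ and $\tilde{B}_{n,2}=B_n(\varphi_{n,\mathbf{x}})$ for a single Brownian bridge $B_n$ over functions of $(U,\mathbf{X})$, whereas you integrate against a white noise on $\mathbb{R}^d$ --- but after the change of variables your variance $\int \mathbf{K}^2(\mathbf{u})\,\Delta_n(\mathbf{x},\mathbf{x}+h_n\mathbf{u})^2\,d\mathbf{u}$ coincides with the paper's expression for $\mathbb{E}[\delta_n(\mathbf{x})^2]$, and the entropy argument is identical (your increment bound $C\min\{|\mathbf{x}-\mathbf{x}'|,h_n\}$ is in fact slightly sharper than the paper's $Ch_n^{-1}\|\mathbf{x}-\mathbf{x}'\|$, but both yield the same Dudley integral).
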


\begin{proof}[Proof of Lemma \ref{step2-thm}]
This lemma can be proved as in the proof of Lemma 3.4 of \cite{Ghosal/Sen/vanderVaart:00}. 
Let: 
\begin{align*}
\phi_{n, \mathbf{x}} (U_i,\mathbf{X}_i)  
&:=  \left\{\mathbb{E}\left[ U^2 \mathbf{K}^2 \left( \frac{\mathbf{X} - \mathbf{x}}{h_n} \right) \right] \right\}^{-1/2} U_i \mathbf{K} \left( \frac{\mathbf{X}_i - \mathbf{x}}{h_n} \right), \\
\varphi_{n, \mathbf{x}} (U_i,\mathbf{X}_i)  
&:=  \left\{h_n^{d} \mathbb{E}\left[ U^2 |  \mathbf{X}_i \right] f_{\mathbf{X}}( \mathbf{X}_i) \int   \mathbf{K}^2( \mathbf{u} )  d\mathbf{u}  \right\}^{-1/2} U_i \mathbf{K} \left( \frac{\mathbf{X}_i - \mathbf{x}}{h_n} \right).
\end{align*}
As in Remark 8.3 of \cite{Ghosal/Sen/vanderVaart:00} and in the proof of Lemma 3.4 of \cite{Ghosal/Sen/vanderVaart:00}, we can regard Gaussian processes $\tilde{B}_{n,1}$ and $\tilde{B}_{n,2}$ as Brownian bridges $B_n(\phi_{n, \mathbf{x}})$ and $B_n(\varphi_{n, \mathbf{x}})$, respectively, in the sense that $\mathbb{E} B_n(g) = 0$ and 
$\mathbb{E} [B_n(g) B_n(g')] = \text{cov}(g,g')$ for $g =  \phi_{n, \mathbf{x}}, g' =  \phi_{n, \mathbf{x'}}$ or  $g =  \varphi_{n, \mathbf{x}}, g =  \varphi_{n, \mathbf{x'}}$.

Define $\delta_{n}(\mathbf{x}) := B_n(\phi_{n, \mathbf{x}}) - B_n(\varphi_{n, \mathbf{x}})$.
Note that $\delta_{n}(\mathbf{x})$ is also a mean zero Gaussian process with: 
\begin{align*}
\mathbb{E}\left[ \delta_{n}(\mathbf{x}) \delta_{n}(\mathbf{x'}) \right] = \mathbb{E}\left[ \{ \phi_{n, \mathbf{x}}(U,\mathbf{X}) - \varphi_{n, \mathbf{x}} (U,\mathbf{X}) \} \{ \phi_{n, \mathbf{x'}}(U,\mathbf{X}) - \varphi_{n, \mathbf{x'}} (U,\mathbf{X}) \} \right].
\end{align*}

Note that:
\begin{align*}
\mathbb{E}\left[ \{ \delta_{n}(\mathbf{x}) \}^2 \right] &= \int \bigg( \left\{ \int  \mathbb{E}\left[ U^2 |  \mathbf{X} = \mathbf{x} + h \mathbf{u} \right] \mathbf{K}^2( \mathbf{u} ) f_{\mathbf{X}}( \mathbf{x} + h \mathbf{u})  d\mathbf{u} \right\}^{-1/2} \\
&- \left\{  \mathbb{E}\left[ U^2 |  \mathbf{X} = \mathbf{x} + h \mathbf{t} \right]  f_{\mathbf{X}}( \mathbf{x} + h \mathbf{t}) \int \mathbf{K}^2\left( \mathbf{u} \right) d\mathbf{u} \right\}^{-1/2} \bigg)^2 \\
&\times   \mathbb{E}\left[ U^2 |  \mathbf{X} = \mathbf{x} + h \mathbf{t} \right] \mathbf{K}^2 \left( \mathbf{t} \right)  f_{\mathbf{X}}( \mathbf{x} + h \mathbf{u})  d\mathbf{t} \\
&= O(h_n^2),
\end{align*}
because $\mathbf{x} \mapsto \mathbb{E}\left[ U^2 | \mathbf{X} =\mathbf{x} \right]f_{\mathbf{X}}(\mathbf{x})$ is Lipschitz continuous.
Thus, the $L_2$-diameter of $\delta_{n}(\cdot)$ is $O(h_n)$. 
In addition, we can show that there exists a constant $C > 0$ such that: 
\begin{align*}
\mathbb{E}\left[ \{ \delta_{n}(\mathbf{x}) - \delta_{n}(\mathbf{x'}) \}^2 \right] 
\leq C h^{-2} \norm{\mathbf{x} - \mathbf{x'}}^2.
\end{align*}
Then arguments similar to those used in the proof of Lemma 3.4 of 
\cite{Ghosal/Sen/vanderVaart:00} yield the desired result.

\end{proof}

\begin{proof}[Proof of Theorem \ref{main-uniform}]
First note that $a_n = O( \sqrt{\log n}\,)$ because $h_n =C n^{-\eta}$. Lemmas \ref{cck-thm} and \ref{step2-thm} together imply that: 
\begin{align*}
  \max_{\mathbf{x} \in \mathcal{I}}   \left| \frac{\hat{g}(\mathbf{x}) -  g(\mathbf{x})}{\hat{s}(\mathbf{x})} - \tilde{B}_{n,2}(h_n^{-1} \mathbf{x}) \right| = o_p \left( a_n \right).
\end{align*}
Note that $\tilde{B}_{n,2}$, defined in Theorem \ref{step2-thm}, is a homogeneous Gaussian field with zero mean and the covariance function $\rho_d(\mathbf{s})$.
Because of the assumption on $K(\cdot)$, the covariance function $\rho_d(\mathbf{s})$ has finite support and is six times differentiable.
The latter property implies that the Gaussian process $\tilde{B}_{n,2}$ is 
three times differentiable in the mean square sense (see, e.g., Chapter 4 of \cite{Rasmussen:Williams:06}).
Then by Theorem 14.3 of \cite{Piterbarg:96} and also by Theorem 3.2 of \cite{Konakov:Piterbarg:1984}, there exists $\kappa > 0$ such that 
uniformly in $t$, on any finite interval:
\begin{align*}
&\mathbb{P}\left( a_n \left[ \max_{\mathbf{x} \in \mathcal{I}} \left| \tilde{B}_{n,2}(h_n^{-1} \mathbf{x}) \right| - a_n  \right]  < t \right) = \\
&\exp \left( - 2e^{-t - t^2/2a_n^2} \right)
\sum_{m=0}^{[(d-1)/2]} h_{m,d-1} a_n^{-2m}
\left( 1 + \frac{t}{a_n^2} \right)^{d-2m-1}
+ O( n^{-\kappa})
\end{align*}
as $n \rightarrow \infty$, where 
$a_n$ is obtained as the largest solution to the equation:
\begin{align*}
\frac{ \text{mes}(\mathcal{I}) {h_n}^{-d} \sqrt{\mathrm{det} \Lambda_2} }{(2\pi)^{(d+1)/2} }
a_n^{d-1} e^{-a_n^2/2} = 1,
\end{align*}
$\Lambda_2$ is the covariance matrix of the vector of the first derivative
of the Gaussian field $\tilde{B}_{n,2}$:
\begin{align*}
\Lambda_2  \equiv \mathrm{cov} \;\mathrm{grad} \; \tilde{B}_{n,2}(\mathbf{t}) = 
\left( - \frac{\partial^2 r(0) }{\partial t_i  \partial t_j}, i,j=1,\ldots,d\right),
\end{align*}
and $[\cdot]$ is the integer part of a number. Simple calculation yields 
$\sqrt{\mathrm{det} \Lambda_2} = \lambda^{d/2}$ with $\lambda$ defined in \eqref{def-lambda}.
\end{proof}

\bibliographystyle{econometrica}
\bibliography{reference-dcate}

\begin{table}[t]
\centering
\caption{Monte Carlo results: CATEF estimates, $p=10$}
\label{tab-est-k10}
\scriptsize{
\begin{tabular}{c|cccc|ccc|ccc}
\hline 
& \multicolumn{4}{c|}{DR} & \multicolumn{3}{c|}{IPW} & \multicolumn{3}{c}{RA} \\
\hline
At $\mathbf{x}=$ & BIAS     & SD     & ASE   & RMSE & BIAS     & SD   &  RMSE & BIAS     & SD     &  RMSE \\
\hline \hline
\multicolumn{11}{c}{ $N=500$} \\						
\hline 
\multicolumn{11}{c}{True propensity score model, True regression model} \\
\hline
-1   & -0.002 & 0.213 & 0.195 & 0.213 & -0.001  & 1.234 & 1.234   & -0.000 & 0.140 & 0.140  \\
-0.5 & 0.002  & 0.197 & 0.166 & 0.197 & -0.009  & 1.139 & 1.138   & 0.001  & 0.116 & 0.116  \\
0    & -0.000 & 0.178 & 0.156 & 0.178 & 0.029   & 1.090 & 1.090   & -0.001 & 0.112 & 0.112  \\
0.5  & 0.001  & 0.185 & 0.165 & 0.185 & -0.005  & 1.103 & 1.103   & 0.000  & 0.118 & 0.118  \\
1    & -0.001 & 0.221 & 0.194 & 0.221 & -0.001  & 1.438 & 1.438   & 0.001  & 0.140 & 0.140  \\
\hline 
\multicolumn{11}{c}{True propensity score model, False regression model} \\
\hline
-1   & 0.008  & 0.288 & 0.247 & 0.288 & -0.024  & 1.223 & 1.223   & 0.293  & 0.124 & 0.318  \\
-0.5 & 0.004  & 0.234 & 0.213 & 0.234 & -0.003  & 1.057 & 1.057   & 0.293  & 0.101 & 0.310  \\
0    & 0.003  & 0.216 & 0.202 & 0.216 & 0.015   & 1.028 & 1.028   & 0.293  & 0.092 & 0.307  \\
0.5  & 0.007  & 0.237 & 0.212 & 0.237 & -0.002  & 1.084 & 1.084   & 0.292  & 0.103 & 0.309  \\
1    & 0.008  & 0.286 & 0.247 & 0.286 & -0.013  & 1.321 & 1.321   & 0.292  & 0.127 & 0.318  \\
\hline 
\multicolumn{11}{c}{False propensity score model, True regression model} \\
\hline
-1   & 0.002  & 0.193 & 0.175 & 0.193 & 0.297   & 0.964 & 1.008   & 0.002  & 0.142 & 0.142  \\
-0.5 & 0.002  & 0.161 & 0.147 & 0.161 & 0.293   & 0.833 & 0.883   & 0.000  & 0.118 & 0.118  \\
0    & -0.000 & 0.154 & 0.139 & 0.154 & 0.302   & 0.796 & 0.851   & -0.001 & 0.109 & 0.109  \\
0.5  & -0.003 & 0.159 & 0.146 & 0.159 & 0.271   & 0.853 & 0.895   & -0.002 & 0.116 & 0.116  \\
1    & -0.001 & 0.193 & 0.175 & 0.193 & 0.293   & 1.047 & 1.087   & -0.001 & 0.141 & 0.141  \\
\hline 
\multicolumn{11}{c}{False propensity score model, False regression model} \\
\hline
-1   & 0.298  & 0.201 & 0.185 & 0.360 & 0.312   & 1.003 & 1.051   & 0.294  & 0.125 & 0.320  \\
-0.5 & 0.295  & 0.170 & 0.155 & 0.340 & 0.304   & 0.827 & 0.881   & 0.293  & 0.102 & 0.311  \\
0    & 0.295  & 0.158 & 0.146 & 0.334 & 0.269   & 0.827 & 0.870   & 0.293  & 0.093 & 0.307  \\
0.5  & 0.292  & 0.167 & 0.156 & 0.336 & 0.280   & 0.874 & 0.917   & 0.295  & 0.103 & 0.313  \\
1    & 0.294  & 0.202 & 0.185 & 0.357 & 0.330   & 1.041 & 1.092   & 0.295  & 0.128 & 0.321  \\
\hline \hline
\multicolumn{11}{c}{ $N=2000$} \\
\hline 
\multicolumn{11}{c}{True propensity score model, True regression model} \\
\hline
-1   & -0.001 & 0.113 & 0.102 & 0.113 & 0.009   & 0.610 & 0.610   & -0.000 & 0.072 & 0.072  \\
-0.5 & -0.001 & 0.093 & 0.086 & 0.093 & 0.008   & 0.516 & 0.516   & -0.000 & 0.059 & 0.059  \\
0    & 0.000  & 0.091 & 0.081 & 0.091 & 0.005   & 0.519 & 0.519   & 0.000  & 0.056 & 0.056  \\
0.5  & 0.001  & 0.093 & 0.085 & 0.093 & 0.006   & 0.545 & 0.545   & 0.001  & 0.059 & 0.059  \\
1    & 0.000  & 0.112 & 0.102 & 0.112 & 0.005   & 0.660 & 0.660   & 0.001  & 0.072 & 0.072  \\
\hline 
\multicolumn{11}{c}{True propensity score model, False regression model} \\
\hline
-1   & 0.005  & 0.144 & 0.131 & 0.144 & 0.000   & 0.614 & 0.614   & 0.295  & 0.063 & 0.301  \\
-0.5 & 0.005  & 0.119 & 0.110 & 0.119 & -0.007  & 0.515 & 0.515   & 0.294  & 0.051 & 0.298  \\
0    & -0.001 & 0.124 & 0.106 & 0.124 & 0.008   & 0.512 & 0.512   & 0.293  & 0.047 & 0.297  \\
0.5  & -0.002 & 0.126 & 0.112 & 0.126 & 0.001   & 0.529 & 0.529   & 0.293  & 0.051 & 0.298  \\
1    & -0.004 & 0.144 & 0.132 & 0.144 & 0.005   & 0.643 & 0.642   & 0.293  & 0.064 & 0.299  \\
\hline 
\multicolumn{11}{c}{False propensity score model, True regression model} \\
\hline
-1   & -0.001 & 0.100 & 0.091 & 0.100 & 0.292   & 0.498 & 0.577   & -0.001 & 0.071 & 0.071  \\
-0.5 & -0.001 & 0.083 & 0.076 & 0.083 & 0.286   & 0.434 & 0.520   & -0.002 & 0.060 & 0.060  \\
0    & -0.001 & 0.080 & 0.072 & 0.080 & 0.291   & 0.427 & 0.517   & -0.002 & 0.056 & 0.056  \\
0.5  & -0.002 & 0.085 & 0.076 & 0.085 & 0.296   & 0.451 & 0.539   & -0.001 & 0.060 & 0.060  \\
1    & 0.000  & 0.100 & 0.091 & 0.100 & 0.291   & 0.546 & 0.619   & -0.000 & 0.073 & 0.073  \\
\hline 
\multicolumn{11}{c}{False propensity score model, False regression model} \\
\hline
-1   & 0.291  & 0.103 & 0.095 & 0.309 & 0.294   & 0.503 & 0.582   & 0.292  & 0.063 & 0.299  \\
-0.5 & 0.292  & 0.085 & 0.080 & 0.304 & 0.298   & 0.443 & 0.534   & 0.292  & 0.051 & 0.297  \\
0    & 0.294  & 0.081 & 0.076 & 0.305 & 0.295   & 0.448 & 0.536   & 0.293  & 0.047 & 0.296  \\
0.5  & 0.292  & 0.087 & 0.080 & 0.305 & 0.291   & 0.454 & 0.539   & 0.292  & 0.052 & 0.297  \\
1    & 0.293  & 0.104 & 0.096 & 0.311 & 0.285   & 0.520 & 0.593   & 0.293  & 0.064 & 0.300  \\
\hline
\end{tabular}
}
\end{table}

\begin{table}[t]
\centering
\caption{Monte Carlo results: CATEF estimates, $p=30$}
\label{tab-est-k30}
\scriptsize{
\begin{tabular}{c|cccc|ccc|ccc}
\hline 
& \multicolumn{4}{c|}{DR} & \multicolumn{3}{c|}{IPW} & \multicolumn{3}{c}{RA} \\
\hline
At $\mathbf{x}=$ & BIAS     & SD     & ASE   & RMSE & BIAS     & SD   &  RMSE & BIAS     & SD     &  RMSE \\
\hline \hline
\multicolumn{11}{c}{ $N=500$} \\						
\hline 
\multicolumn{11}{c}{True propensity score model, True regression model} \\
\hline
-1   & -0.001 & 0.221 & 0.211 & 0.221 & 0.005   & 1.327 & 1.326   & -0.002 & 0.149 & 0.149  \\
-0.5 & -0.006 & 0.190 & 0.179 & 0.190 & -0.001  & 1.229 & 1.229   & -0.002 & 0.122 & 0.122  \\
0    & 0.001  & 0.209 & 0.172 & 0.209 & 0.026   & 1.207 & 1.207   & -0.001 & 0.115 & 0.115  \\
0.5  & -0.002 & 0.195 & 0.181 & 0.195 & -0.013  & 1.225 & 1.225   & -0.002 & 0.124 & 0.124  \\
1    & -0.001 & 0.228 & 0.213 & 0.228 & 0.022   & 1.381 & 1.381   & 0.002  & 0.150 & 0.150  \\
\hline 
\multicolumn{11}{c}{True propensity score model, False regression model} \\
\hline
-1   & 0.007  & 0.315 & 0.274 & 0.315 & 0.011   & 1.349 & 1.349   & 0.290  & 0.133 & 0.319  \\
-0.5 & 0.009  & 0.262 & 0.235 & 0.262 & 0.003   & 1.231 & 1.231   & 0.292  & 0.108 & 0.311  \\
0    & 0.009  & 0.247 & 0.222 & 0.247 & 0.008   & 1.161 & 1.161   & 0.293  & 0.097 & 0.308  \\
0.5  & 0.006  & 0.262 & 0.234 & 0.263 & -0.001  & 1.262 & 1.262   & 0.292  & 0.107 & 0.312  \\
1    & 0.014  & 0.303 & 0.270 & 0.304 & -0.022  & 1.414 & 1.414   & 0.292  & 0.132 & 0.320  \\
\hline 
\multicolumn{11}{c}{False propensity score model, True regression model} \\
\hline
-1   & -0.000 & 0.196 & 0.187 & 0.195 & 0.272   & 0.976 & 1.013   & 0.002  & 0.147 & 0.147  \\
-0.5 & 0.000  & 0.161 & 0.157 & 0.161 & 0.284   & 0.856 & 0.902   & 0.001  & 0.123 & 0.123  \\
0    & 0.001  & 0.154 & 0.148 & 0.154 & 0.290   & 0.798 & 0.849   & 0.000  & 0.115 & 0.115  \\
0.5  & -0.001 & 0.164 & 0.156 & 0.164 & 0.305   & 0.845 & 0.899   & -0.001 & 0.125 & 0.125  \\
1    & 0.000  & 0.193 & 0.186 & 0.193 & 0.299   & 1.034 & 1.076   & -0.001 & 0.145 & 0.145  \\
\hline 
\multicolumn{11}{c}{False propensity score model, False regression model} \\
\hline
-1   & 0.291  & 0.205 & 0.197 & 0.356 & 0.289   & 0.964 & 1.006   & 0.291  & 0.135 & 0.321  \\
-0.5 & 0.292  & 0.169 & 0.167 & 0.337 & 0.307   & 0.825 & 0.880   & 0.292  & 0.111 & 0.312  \\
0    & 0.291  & 0.155 & 0.157 & 0.330 & 0.296   & 0.823 & 0.875   & 0.290  & 0.100 & 0.307  \\
0.5  & 0.292  & 0.170 & 0.166 & 0.338 & 0.300   & 0.819 & 0.872   & 0.290  & 0.111 & 0.310  \\
1    & 0.288  & 0.203 & 0.197 & 0.352 & 0.263   & 1.007 & 1.040   & 0.291  & 0.137 & 0.322  \\
\hline \hline
\multicolumn{11}{c}{ $N=2000$} \\
\hline 
\multicolumn{11}{c}{True propensity score model, True regression model} \\
\hline
-1   & 0.003  & 0.106 & 0.102 & 0.106 & -0.002  & 0.613 & 0.613   & 0.000  & 0.074 & 0.074  \\
-0.5 & 0.000  & 0.094 & 0.086 & 0.094 & 0.001   & 0.537 & 0.537   & 0.000  & 0.060 & 0.060  \\
0    & -0.002 & 0.091 & 0.081 & 0.091 & 0.011   & 0.520 & 0.520   & -0.000 & 0.056 & 0.056  \\
0.5  & 0.000  & 0.092 & 0.086 & 0.092 & -0.004  & 0.534 & 0.534   & -0.000 & 0.060 & 0.060  \\
1    & 0.001  & 0.110 & 0.102 & 0.110 & 0.010   & 0.644 & 0.644   & -0.000 & 0.072 & 0.072  \\
\hline 
\multicolumn{11}{c}{True propensity score model, False regression model} \\
\hline
-1   & 0.003  & 0.143 & 0.133 & 0.143 & -0.011  & 0.611 & 0.611   & 0.294  & 0.066 & 0.301  \\
-0.5 & 0.002  & 0.119 & 0.113 & 0.119 & -0.002  & 0.526 & 0.526   & 0.293  & 0.053 & 0.298  \\
0    & 0.001  & 0.122 & 0.108 & 0.122 & 0.019   & 0.523 & 0.523   & 0.293  & 0.049 & 0.297  \\
0.5  & 0.005  & 0.122 & 0.113 & 0.122 & -0.003  & 0.544 & 0.544   & 0.292  & 0.054 & 0.297  \\
1    & 0.004  & 0.144 & 0.132 & 0.144 & 0.010   & 0.622 & 0.622   & 0.291  & 0.066 & 0.299  \\
\hline 
\multicolumn{11}{c}{False propensity score model, True regression model} \\
\hline
-1   & -0.000 & 0.098 & 0.092 & 0.098 & 0.287   & 0.494 & 0.571   & -0.001 & 0.075 & 0.075  \\
-0.5 & 0.001  & 0.081 & 0.077 & 0.081 & 0.296   & 0.429 & 0.521   & -0.001 & 0.061 & 0.061  \\
0    & -0.000 & 0.079 & 0.073 & 0.079 & 0.298   & 0.413 & 0.509   & -0.000 & 0.056 & 0.056  \\
0.5  & -0.002 & 0.083 & 0.077 & 0.083 & 0.286   & 0.446 & 0.529   & -0.002 & 0.060 & 0.060  \\
1    & -0.001 & 0.101 & 0.092 & 0.101 & 0.302   & 0.513 & 0.595   & -0.001 & 0.073 & 0.073  \\
\hline 
\multicolumn{11}{c}{False propensity score model, False regression model} \\
\hline
-1   & 0.294  & 0.104 & 0.098 & 0.312 & 0.301   & 0.493 & 0.578   & 0.292  & 0.066 & 0.300  \\
-0.5 & 0.293  & 0.088 & 0.082 & 0.306 & 0.305   & 0.431 & 0.528   & 0.293  & 0.054 & 0.298  \\
0    & 0.292  & 0.081 & 0.077 & 0.303 & 0.292   & 0.418 & 0.510   & 0.293  & 0.048 & 0.297  \\
0.5  & 0.291  & 0.087 & 0.082 & 0.304 & 0.287   & 0.436 & 0.522   & 0.292  & 0.053 & 0.297  \\
1    & 0.290  & 0.106 & 0.098 & 0.309 & 0.292   & 0.518 & 0.595   & 0.291  & 0.067 & 0.299  \\
\hline
\end{tabular}
}
\end{table}

\begin{table}[t]
\caption{Monte Carlo results, CATEF confidence band, $p=10$}
\label{tab-cb-k10}
\begin{tabular}{c|cccc}
\hline
Confidence level & CP & Mcri    & Sdcri & GCP \\
\hline \hline 
& \multicolumn{4}{c}{$N=500$} \\
\hline 
\multicolumn{5}{c}{True propensity score model, True regression model} \\
\hline
99\% & 0.986 & 3.290 & 0.091 & 0.999     \\
95\% & 0.939 & 2.750 & 0.107 & 0.999     \\
90\% & 0.887 & 2.474 & 0.118 & 0.995     \\
\hline 
\multicolumn{5}{c}{True propensity score model, False regression model} \\
\hline
99\% & 0.953 & 3.300 & 0.112 & 0.998     \\
95\% & 0.881 & 2.762 & 0.131 & 0.991     \\
90\% & 0.823 & 2.487 & 0.144 & 0.979     \\
\hline 
\multicolumn{5}{c}{False propensity score model, True regression model} \\
\hline
99\% & 0.980 & 3.284 & 0.075 & 1.000     \\
95\% & 0.926 & 2.743 & 0.088 & 0.998     \\
90\% & 0.856 & 2.466 & 0.097 & 0.994     \\
\hline 
\multicolumn{5}{c}{False propensity score model, False regression model} \\
\hline
99\% & 0.561 & 3.285 & 0.079 & 0.999     \\
95\% & 0.299 & 2.744 & 0.093 & 0.988     \\
90\% & 0.186 & 2.468 & 0.102 & 0.965     \\
\hline \hline 
& \multicolumn{4}{c}{$N=3000$} \\
\hline 
\multicolumn{5}{c}{True propensity score model, True regression model} \\
\hline
99\% & 0.988 & 3.299 & 0.090 & 1.000     \\
95\% & 0.941 & 2.761 & 0.106 & 1.000     \\
90\% & 0.880 & 2.486 & 0.117 & 0.997     \\
\hline 
\multicolumn{5}{c}{True propensity score model, False regression model} \\
\hline
99\% & 0.966 & 3.303 & 0.099 & 1.000     \\
95\% & 0.907 & 2.765 & 0.116 & 0.993     \\
90\% & 0.848 & 2.491 & 0.128 & 0.985     \\
\hline 
\multicolumn{5}{c}{False propensity score model, True regression model} \\
\hline
99\% & 0.987 & 3.296 & 0.082 & 1.000     \\
95\% & 0.929 & 2.757 & 0.097 & 0.998     \\
90\% & 0.863 & 2.482 & 0.106 & 0.995     \\
\hline 
\multicolumn{5}{c}{False propensity score model, False regression model} \\
\hline
99\% & 0.014 & 3.296 & 0.081 & 0.986     \\
95\% & 0.001 & 2.756 & 0.096 & 0.912     \\
90\% & 0.000 & 2.481 & 0.105 & 0.843     \\
\hline
\end{tabular}
\end{table}

\begin{table}[t]
\caption{Monte Carlo results, CATEF confidence band, $p=30$}
\label{tab-cb-k30}
\begin{tabular}{c|cccc}
\hline
Confidence level & CP & Mcri    & Sdcri & GCP \\
\hline \hline 
& \multicolumn{4}{c}{$N=500$} \\
\hline 
\multicolumn{5}{c}{True propensity score model, True regression model} \\
\hline
99\% & 0.993 & 3.294 & 0.096 & 1.000     \\
95\% & 0.960 & 2.754 & 0.113 & 0.999     \\
90\% & 0.915 & 2.478 & 0.124 & 0.997     \\
\hline 
\multicolumn{5}{c}{True propensity score model, False regression model} \\
\hline
99\% & 0.969 & 3.302 & 0.113 & 0.999     \\
95\% & 0.909 & 2.763 & 0.132 & 0.994     \\
90\% & 0.853 & 2.489 & 0.145 & 0.985     \\
\hline 
\multicolumn{5}{c}{False propensity score model, True regression model} \\
\hline
99\% & 0.989 & 3.284 & 0.078 & 1.000     \\
95\% & 0.952 & 2.742 & 0.092 & 0.999     \\
90\% & 0.904 & 2.466 & 0.101 & 0.996     \\
\hline 
\multicolumn{5}{c}{False propensity score model, False regression model} \\
\hline
99\% & 0.683 & 3.284 & 0.077 & 0.999     \\
95\% & 0.410 & 2.743 & 0.091 & 0.994     \\
90\% & 0.264 & 2.466 & 0.099 & 0.977     \\
\hline \hline 
& \multicolumn{4}{c}{$N=2000$} \\
\hline 
\multicolumn{5}{c}{True propensity score model, True regression model} \\
\hline
99\% & 0.990 & 3.296 & 0.085 & 1.000     \\
95\% & 0.950 & 2.757 & 0.100 & 1.000     \\
90\% & 0.893 & 2.481 & 0.110 & 0.998     \\
\hline 
\multicolumn{5}{c}{True propensity score model, False regression model} \\
\hline
99\% & 0.975 & 3.304 & 0.101 & 0.999     \\
95\% & 0.920 & 2.766 & 0.119 & 0.996     \\
90\% & 0.863 & 2.492 & 0.130 & 0.988     \\
\hline 
\multicolumn{5}{c}{False propensity score model, True regression model} \\
\hline
99\% & 0.987 & 3.296 & 0.079 & 1.000     \\
95\% & 0.935 & 2.756 & 0.093 & 0.998     \\
90\% & 0.876 & 2.481 & 0.103 & 0.993     \\
\hline 
\multicolumn{5}{c}{False propensity score model, False regression model} \\
\hline
99\% & 0.021 & 3.296 & 0.081 & 0.990     \\
95\% & 0.003 & 2.757 & 0.096 & 0.921     \\
90\% & 0.001 & 2.482 & 0.106 & 0.852     \\
\hline
\end{tabular}
\end{table}

\setcounter{figure}{0} 

\begin{figure}[p]
\label{fig-bw-cb5p}
\caption{CATEF for the effect of smoking on birth weights, Pennsylvania data, 95\% confidence bands} 
\includegraphics[width=0.8\linewidth, bb = 0 0 648 648]{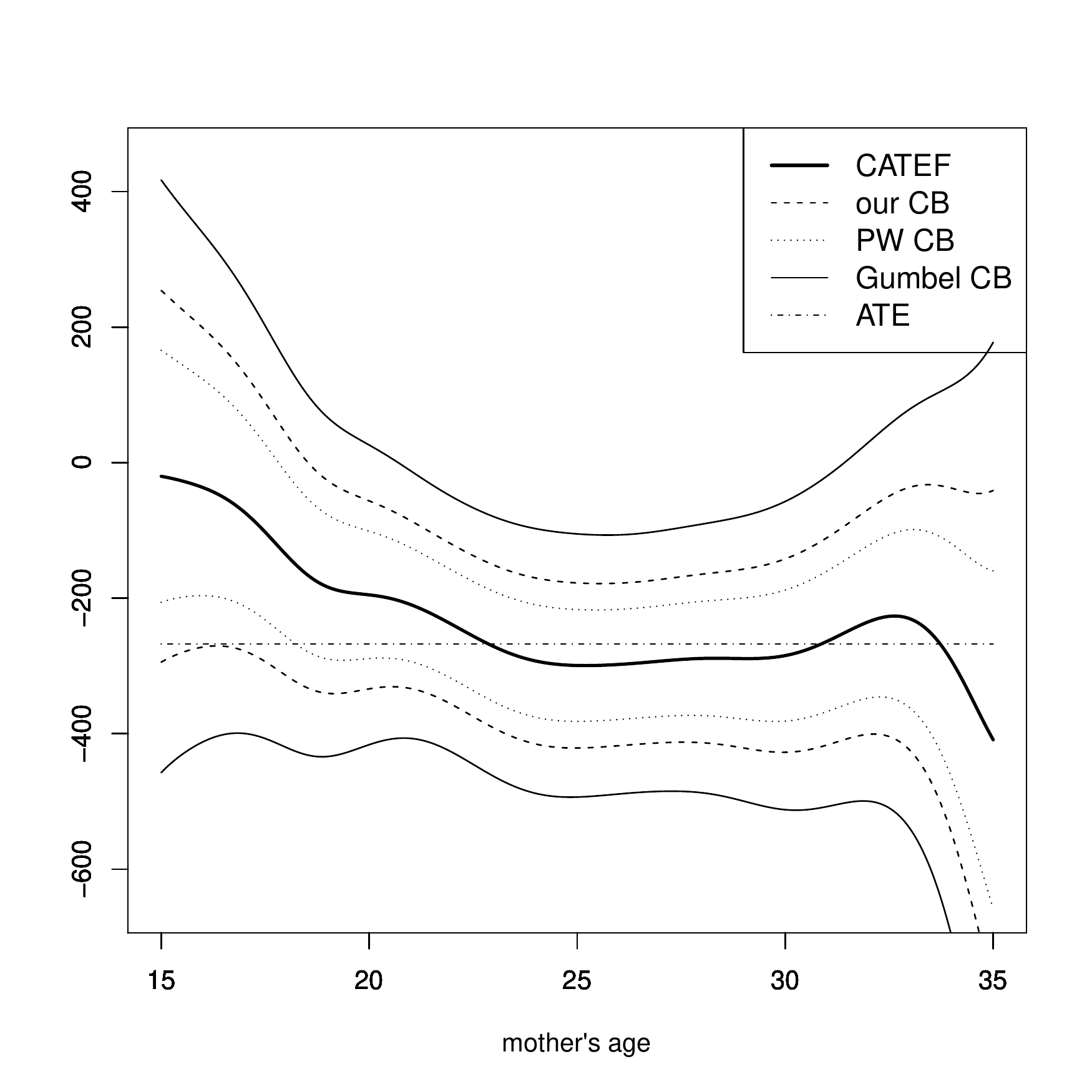}

\begin{minipage}{\linewidth}
Note: ``CATEF'' = the estimated CATEF; 
``our CB'' = the uniformly valid confidence band proposed in this paper; 
 ``PW CB'' = the confidence band that is valid only in a pointwise sense;
``Gumbel CB'' = the uniformly valid confidence band based on the Gumbel approximation;
``ATE'' = the estimated value of the average treatment effect.
\end{minipage}
\end{figure}

\begin{figure}[p]
\label{fig-bw-cb5p-white}
\caption{CATEF for the effect of smoking on birth weights, North Carolina data, with 95\% confidence bands} 
\includegraphics[width=0.8\linewidth, bb = 0 0 504 504]{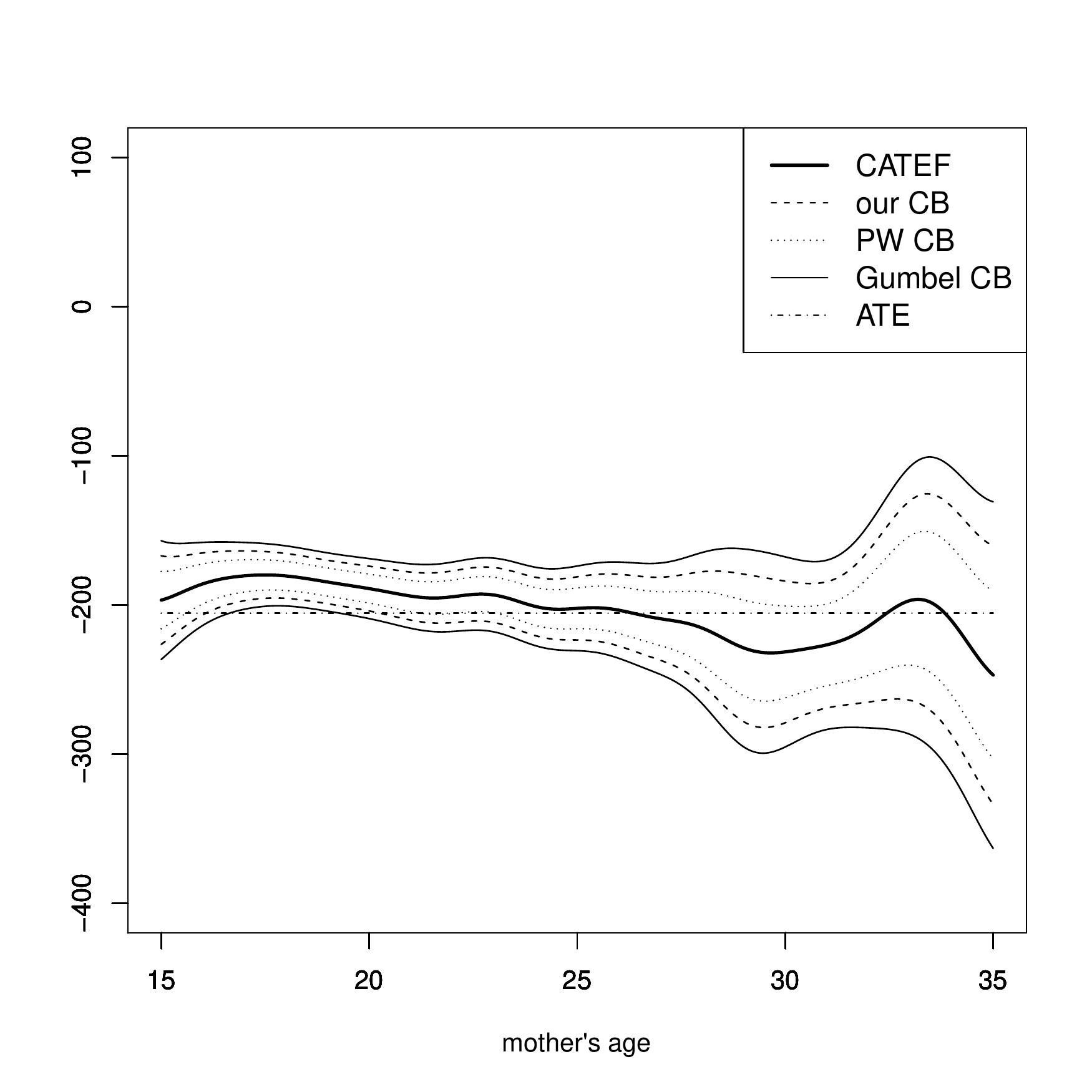}

\begin{minipage}{\linewidth}
Note: ``CATEF'' = the estimated CATEF; 
``our CB'' = the uniformly valid confidence band proposed in this paper; 
 ``PW CB'' = the confidence band that is valid only in a pointwise sense;
``Gumbel CB'' = the uniformly valid confidence band based on the Gumbel approximation;
``ATE'' = the estimated value of the average treatment effect.
\end{minipage}
\end{figure}

\begin{figure}[p]
\label{fig-bw-cb5p-oldcov}
\caption{CATEF for the effect of smoking on birth weights, North Carolina data, smaller set of covariates, with 95\% confidence bands} 
\includegraphics[width=0.8\linewidth, bb = 0 0 504 504]{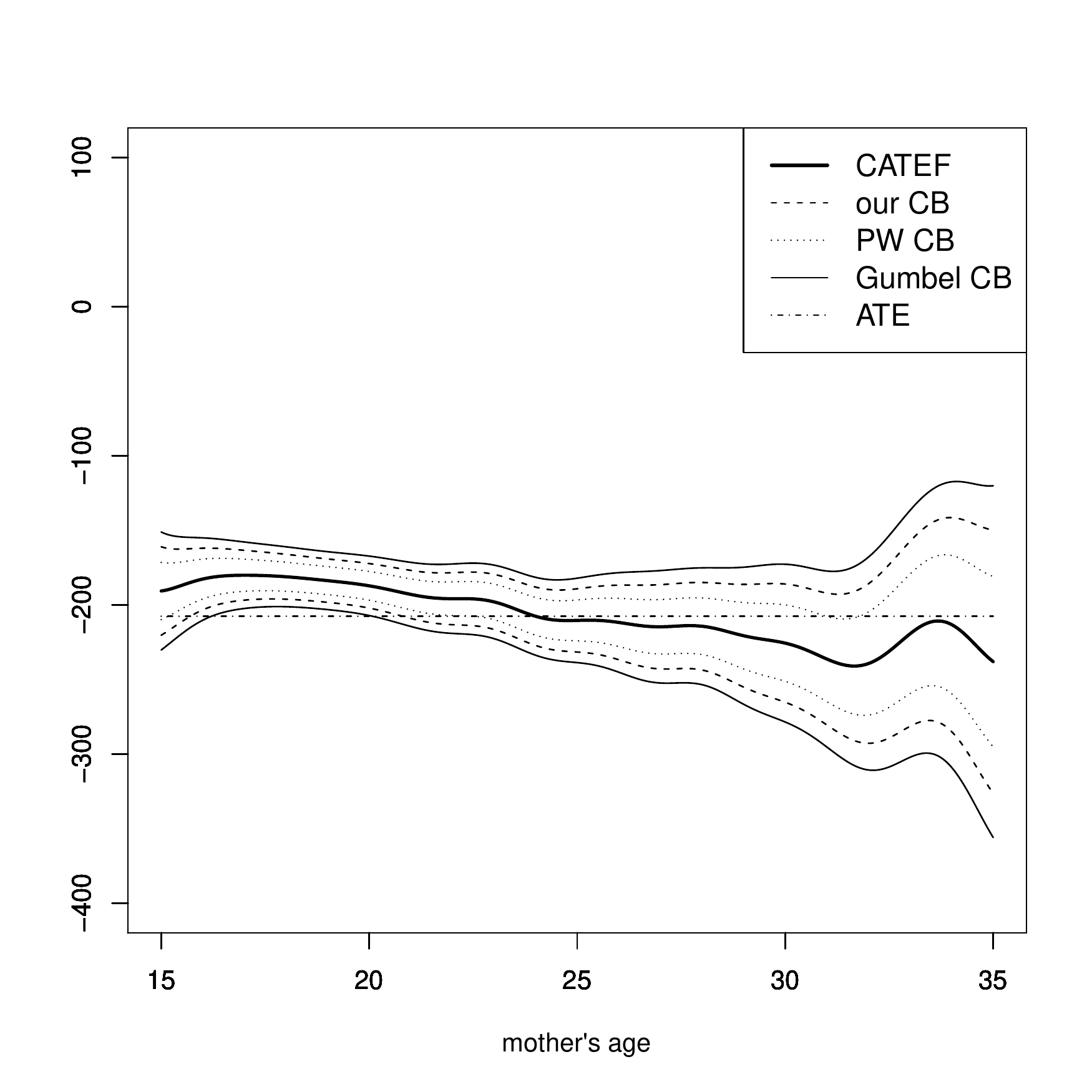}

\begin{minipage}{\linewidth}
Note: ``CATEF'' = the estimated CATEF; 
``our CB'' = the uniformly valid confidence band proposed in this paper; 
 ``PW CB'' = the confidence band that is valid only in a pointwise sense;
``Gumbel CB'' = the uniformly valid confidence band based on the Gumbel approximation;
``ATE'' = the estimated value of the average treatment effect.
\end{minipage}
\end{figure}

\end{document}